\documentclass[11pt, draftcls, onecolumn]{IEEEtran}

 \usepackage{setspace}
 \setstretch{1.5}

\usepackage{tikz}
\usetikzlibrary{shapes, arrows, positioning, calc}
\usepackage{graphicx,xcolor}
\usepackage[noadjust]{cite}
\usepackage{enumitem}
\usepackage{setspace}

\usepackage{amsmath,amssymb,amsthm,amsfonts}

\usepackage{bbm}%For indicator function
\usepackage{bm}%for boldsymbol
\usepackage{mathrsfs}
\usepackage{dsfont}

\newtheorem{theorem}{Theorem}

\newtheorem{lemma}{Lemma}

\theoremstyle{definition}

\usepackage{caption}
\usepackage{subcaption}
\captionsetup[figure]{font=footnotesize}

\allowdisplaybreaks

\newcommand{\ea}{A}
\newcommand{\eb}{B}
\newcommand{\eea}{a}
\newcommand{\eeb}{b}

\newcommand{\ga}{G_{\scriptscriptstyle a}}

\newcommand{\gga}{g_{\scriptscriptstyle a}}

\newcommand{\gb}{G_{\scriptscriptstyle b}}
\newcommand{\ggb}{g_{\scriptscriptstyle b}}

\newcommand{\sta}{S_{\scriptscriptstyle a}}
\newcommand{\starel}{s_{\scriptscriptstyle a}}

\newcommand{\stb}{S_{\scriptscriptstyle b}}
\newcommand{\stbrel}{s_{\scriptscriptstyle b}}

\newcommand{\pgagb}{P_{\scriptscriptstyle G_a, G_b}}
\newcommand{\pgasb}{P_{\scriptscriptstyle G_a, S_b}}

\newcommand{\pgacgb}{P_{\scriptscriptstyle G_a|G_b}}
\newcommand{\pgacsb}{P_{\scriptscriptstyle G_a|S_b}}
\newcommand{\psacgb}{P_{\scriptscriptstyle S_a|G_b}}
\newcommand{\psacsb}{P_{\scriptscriptstyle S_a|S_b}}

\newcommand{\pgb}{P_{\scriptscriptstyle G_b}}

\newcommand{\psb}{P_{\scriptscriptstyle S_b}}

\newcommand{\ua}{U_{\scriptscriptstyle a}}
\newcommand{\ub}{U_{\scriptscriptstyle b}}
\newcommand{\hua}{\hat{U}_{\scriptscriptstyle a}}

\newcommand{\uua}{u_{\scriptscriptstyle a}}
\newcommand{\uub}{u_{\scriptscriptstyle b}}
\newcommand{\huua}{\hat{u}_{\scriptscriptstyle a}}
\newcommand{\puacub}{P_{\scriptscriptstyle U_a|U_b}}
\newcommand{\puaub}{P_{\scriptscriptstyle U_a, U_b}}

\newcommand{\pab}{p_{\scriptscriptstyle ab}}

\newcommand{\pzerozero}{p_{\scriptscriptstyle 00}}
\newcommand{\pzeroone}{p_{\scriptscriptstyle 01}}
\newcommand{\ponezero}{p_{\scriptscriptstyle 10}}
\newcommand{\poneone}{p_{\scriptscriptstyle 11}}

\graphicspath{{./Figures/}}

% \geometry{left=1in,right=1in,top=1in,bottom=1in}
% \setstretch{1.35}

% \author{}

\title{Graph Compression with Side Information at the Decoder}

\author{Praneeth Kumar Vippathalla, Mihai-Alin Badiu and Justin P. Coon \thanks{ The authors are with the Department of Engineering Science, University of Oxford, OX1 3PJ Oxford, U.K. (e-mail: praneeth.vippathalla@eng.ox.ac.uk; mihai.badiu@eng.ox.ac.uk;
justin.coon@eng.ox.ac.uk)}
\thanks{This research was funded in whole or in part by the U. S. Army Research Laboratory and the U. S. Army Research Office (W911NF-22-1-0070), and EPSRC (EP/T02612X/1). For the purpose of Open Access, the authors have applied a CC BY public copyright licence to any Author Accepted Manuscript (AAM) version arising from this submission.}}

\begin{document}

\maketitle

%Abstract
\begin{abstract}
  In this paper, we study the problem of graph compression with side information at the decoder. The focus is on the situation when an unlabelled graph (which is also referred to as a structure) is to be compressed or is available as side information. For correlated Erd\H{o}s-R\'enyi weighted random graphs, we give a precise characterization of the smallest rate at which a labelled graph or its structure can be compressed with aid of a correlated labelled graph or its structure at the decoder. We approach this problem by using the entropy-spectrum framework and establish some convergence results for conditional distributions involving structures, which play a key role in the construction of an optimal encoding and decoding scheme.  Our proof essentially uses the fact that, in the considered correlated Erd\H{o}s-R\'enyi model, the structure retains most of the information about the labelled graph. Furthermore, we consider the case of unweighted graphs and present how the optimal decoding can be done using the notion of graph alignment.
\end{abstract} 

%keywords
\begin{IEEEkeywords}
Graph compression, side information, labelled graphs, structures, conditional entropy, graph alignment
\end{IEEEkeywords}

%Introduction
\section{Introduction} \label{sec:introduction}
Today, graph-structured data can be found in numerous contexts. Advancements in data collection methods and applications have led to a huge growth in the amount of graphical data that is generated. Some examples of graphical data include social interaction networks, biological networks, protein-protein interaction networks, and web graphs. Graphical data represents how entities are related to each other in a pairwise manner. In practice, we often encounter unlabelled graphs along with labelled graphs\footnote{We use the terms ``structure'' and ``unlabelled graph'' interchangeably throughout the paper.}.  Where labelled graphs contain, for example, names of individuals along with the data pertaining to them, unlabelled graphs mainly appear in the context of privacy, where  graphical data is revealed to the public for scientific purposes with the identities of the nodes hidden. Such anonymized graphs can then be analyzed without knowing the true identities of the nodes, thus ensuring the privacy of the nodes. 

Although graphical data is useful for representing interrelations between entities, the size of such data is often enormous, which can make storage and processing difficult. For example, at the time of writing, the number of Facebook users is close to 3 billion.  The prevalence of such large graphical data sets spurred interest in the information-theoretic study of graph compression in recent years. Many works, such as \cite{choi_structural_entropy, Abbe2016GraphClusters, delgosha_universal_compression, alankrita_universal_sbm, mihai2021structural, bustin_lossy, martin_rate_distortion_sbm,lossy_spatial_24}, studied the compression of a labelled graph or, alternatively, an unlabelled graph under various assumptions on the underlying random  graph model and constraints on compression fidelity. Another scenario that is of importance is the compression of graphs with side information. Here one would like to compress a labelled or an unlabelled graph using a correlated labelled or  unlabelled graph. For instance, one could possibly compress a labelled graph at a smaller rate with aid of a publicly available anonymized graph that is correlated with the labelled graph. This motivates us to study the graph compression problem with side information from an information theoretic perspective. The broad question of interest is the following: \emph{what is the minimum achievable compression rate?}

Before undertaking to answer this question, a few issues must be addressed. The first relates to the use of the structure as side information. A natural question would be whether there is any tangible gain in the compression rate when compared to not using any side information. After all, the structure ignores the node label information. We will see later in the paper that most of the information about the random graph that we consider is contained in its structure and it can be fruitfully exploited as side information.

Another issue relates to the location of the side information: encoder and decoder, or only at the decoder. In the situations of compression involving structures, the availability of the side information at only the decoder is the most interesting. For example, if a structure and a correlated graph
are present at the encoder, then the correlated graph can be used to recover the labels of the structure, which is usually referred to as \emph{deanyomization}. There are already well-known techniques available for deanonymization \cite{cullina_improved, pedarsani_2011_privacy, narayan_deanonymize, shirani_correlated_graph_matching}. The resulting deanonymized labelled graph can now be used to compress the correlated labelled graph. However, if the structure is only available at the decoder and there are no means of deanonymizing it, then it must be used without the knowledge of the labels. 

The problem of graph compression using a correlated unlabelled graph at the decoder was introduced and studied recently in \cite{nikpey_graph_side}. The authors of that work considered correlated unweighted graphs that are generated independently by sampling edges from an underlying Erd\H{o}s-R\'enyi graph with a fixed sampling probability. Under this model, an achievable compression rate was derived.  The exact characterization of the optimal compression rate was left as an open problem.

\subsection{Summary of Contributions}
We consider the general setting of weighted graphs, where the weight (or mark) on an edge represents the degree of association of the nodes connected by that edge. We also consider the correlated Erd\H{o}s-R\'enyi model for generating two labelled graphs, from which the respective unlabelled graphs are generated. 

Our main result is the precise characterization of the minimum achievable rate of compression of a labelled or an unlabelled (weighted) graph with a labelled or an unlabelled (weighted) graph as side information at the decoder. This general result resolves the open problem of \cite{nikpey_graph_side}. The main obstacle in the derivation is posed by the structure as it does not lend itself to a simpler analysis. The main feature of independence across edges in an  Erd\H{o}s-R\'enyi labelled random graph is no longer useful in dealing with the structure as multiple labelled graphs correspond to a single structure.

In order to deal with structures, we take  the entropy-spectrum approach of \cite{han_book}. Roughly speaking, in this approach, the convergence properties of the normalized negative logarithm  of the probability distribution of a general source are used to identify the compression rate.  In the paper, we prove a convergence result related to the conditional probability distribution involving structures. To prove this result, we first argue that the maximum of $\pgagb(\ga, \pi(\gb))$ over all permutations $\pi$ of vertex labels  of $\gb$ is close in some sense to $\pgagb(\ga, \gb)$  for correlated Erd\H{o}s-R\'enyi random graphs $\ga$ and $\gb$. Then, we combine this with the fact that not many labelings are possible for a structure compared to the number of possible labelled graphs; there are at most $n!$ distinct labelings of a graph, where $n$ is the number of vertices, while there are $2^{\binom{n}{2}}$ labelled graphs.

The convergence result on the conditional probability distribution allows us to construct an encoding and decoding scheme for graph compression with side information at the decoder. It turns out that the rate achieved by this scheme is optimal. Finally, in the special case of unweighted graphs, we give a way to describe the optimal decoder using the notion of graph alignment.

\subsection{Notation}
 For $k \in \mathbb{N}$, we use the notation $[k]$ to denote the set $\{1,\ldots,k\}$. Throughout the paper, $\log$ stands for the base-$2$ logarithm. The set of all permutations defined over $[n]$ is denoted by $\mathcal{S}_n$. In this paper, we use uppercase letters for random variables (e.g., $A, B, G, S,$ and $U$) and lowercase letters for their realizations (e.g., $a, b, g, s,$ and $u$). Given a random variable $X$ taking values in a finite set $\mathcal{X}$ with probability distribution $P_{X}$, the random variable $P_{X}(X)$ represents the evaluation of the function $P_{X}(\cdot)$ at a random point $X$. The cardinality of a finite set $\mathcal{X}$ is written as $|\mathcal{X}|$. A sequence of random variables $\{X_n\}_{n\geq 1}$ is said to \emph{converge to $X$ in probability} as $n \to \infty$ if $\mathbb{P}\left(|X_n-X|>\delta\right) \rightarrow 0$ as $n \to \infty$ for any $\delta>0$. 
 
\subsection{Organization}
The rest of the paper is organized as follows. In Section~\ref{sec:graph_compression_problem}, we give the necessary preliminaries and then formally introduce the problem of graph compression with side information for correlated Erd\H{o}s-R\'enyi random graphs. Section~\ref{sec:min_comp_rate} contains the main results related to the minimum achievable compression rate for the general case of weighted graphs. In Section~\ref{sec:unweighted_graphs}, we address the unweighted graph case and present the decoding scheme in terms of the graph alignment concept. We finally give a discussion of potential extensions in Section~\ref{sec:discussion}.

%Problem Formulation
\section{Graph Compression with Side Information} \label{sec:graph_compression_problem}
\subsection{Preliminaries}
Let $G=(V, E, w)$ be a (weighted) graph with the vertex set $V=[n]$ containing $n$ vertices, the edge set $E$, and the edge-weight function $w:E \to [k]$. The weight $w(e)$ of an edge $e$ connecting two vertices $i, j \in V$ may indicate the strength of the connection between $i$ and $j$. The weights of the edges are sometimes referred to as marks of the edges. We restrict our attention to simple and undirected weighted graphs. It is often useful to think of a graph in terms of its adjacency matrix representation. The entries of the adjacency matrix $\mathbf{A}=(a_{i,j})_{i,j \in [n]}$ of a graph are defined as $a_{i,j}= 0$ if there is no edge between vertices $i$ and $j$, and $a_{i,j}= w(e)$ if $e \in E$ is the edge between vertices $i$ and $j$. As the graphs are simple and undirected, the adjacency matrix is symmetric and the entries $(a_{i,j}:1\leq i < j \leq n)$ completely specify the graph $G$. Using this equivalence, it is customary to represent a graph by a string of $\binom{n}{2}$ symbols over the alphabet $\{0,1, \ldots, k\}$. In the case of random graphs, a probability distribution of graphs $P_{G}$ uniquely specifies a joint distribution for $(a_{i,j}:1\leq i < j \leq n)$, and vice versa. 

A graph operation that is mainly used later in this paper is the permutation of the vertex labels. Let $\pi:[n] \to [n]$ be a permutation map of the vertex labels. We use the notation $\pi(G)$ to denote the graph obtained after applying $\pi$ to the vertex labels of $G$. If $a^{\pi}_{i,j}$ and $a_{i,j}$ are the $(i,j)$th entries of the adjacency matrices of $\pi(G)$ and $G$, respectively, then $a^{\pi}_{i,j}= a_{\pi^{-1}(i), \pi^{-1}(j)}$.

A structure (or an unlabelled graph) $S$ of a weighted graph $G$ is obtained by removing the vertex labels of the graph. Precisely speaking, a structure $S$ of a graph $G$ is the set of (weighted) graphs isomorphic to $G$. This set contains all the distinct graphs that are obtained by permuting the vertex labels of $G$. Thus, the probability distribution $P_S$ of structures induced by a probability distribution $P_G$ of graphs is given by
\begin{align}
    P_S(s) =\sum_{g \cong s} P_G(g),
\end{align}
where we use the notation $G \cong S$ to mean that the graph $G$ has the structure $S$. It is well-known \cite{choi_structural_entropy} that the number of distinct graphs having a structure $S$ is $\frac{n!}{|\operatorname{Aut}(S)|}$, where $\operatorname{Aut}(S)$ denotes the automorphism group $\operatorname{Aut}(G)$ of a graph $G$ with the structure $S$. The automorphism group of $G$ contains all the permutations of vertex labels which when applied leaves $G$ unchanged. 

In the context of graph anonymization, the structure of a graph is revealed after removing the vertex labels of the graph. See Fig.~\ref{fig:anony:example} for a toy example. Another equivalent way of anonymizing a graph is by applying a uniform random  permutation $\Pi$ to a graph $G$ and revealing $\Pi(G)$ publicly. These two methods are equivalent because $S$ can be obtained by removing the vertex labels of $\Pi(G)$, and $\Pi(G)$ can be obtained by randomly assigning a labelling to $S$. From the point of view of privacy, both these representations carry equal amounts of information about the original vertex labels. So, we will work with structures in this paper. 
\begin{figure}[t]
     \centering
     \begin{subfigure}[b]{0.6\columnwidth}
         \centering
         \resizebox{0.5\columnwidth}{!}{\begin{tikzpicture}[node distance=4cm,auto,>=latex]
%% vertices
\draw[fill=black] (0,0) circle (2pt);
\draw[fill=black] (2,0) circle (2pt);
\draw[fill=black] (1,1.73) circle (2pt);

%% vertex labels
\node at (-0.4,0) {1};
\node at (2.4,0) {2};
\node at (1,2.1) {3};
\node at (1,-0.7) {$g$};

%%% edges
\draw[thick] (0,0) -- (2,0) -- (1,1.73);

\draw[->, thick] (3,0.866) to (4,0.866);

%% vertices
\draw[fill=black] (5,0) circle (2pt);
\draw[fill=black] (7,0) circle (2pt);
\draw[fill=black] (6,1.73) circle (2pt);

\node at (6,-0.7) {$s$};

%%% edges
\draw[thick] (5,0) -- (7,0) -- (6,1.73);

\end{tikzpicture}}
         \caption{Vertex label anonymization}
         \label{fig:anony}
     \end{subfigure}
     \bigskip
    
     \begin{subfigure}[b]{0.6\columnwidth}
         \centering
         \resizebox{\columnwidth}{!}{\begin{tikzpicture}[node distance=4cm,auto,>=latex]
%% vertices
\draw[fill=black] (0,0) circle (2pt);
\draw[fill=black] (2,0) circle (2pt);
\draw[fill=black] (1,1.73) circle (2pt);

\node at (1,-0.7) {$s$};

%%% edges
\draw[thick] (0,0) -- (2,0) -- (1,1.73);

\draw[->, thick] (3,0.866) to (4,0.866);

%% vertices
\draw[fill=black] (5,0) circle (2pt);
\draw[fill=black] (7,0) circle (2pt);
\draw[fill=black] (6,1.73) circle (2pt);

%% vertex labels
\node at (4.6,0) {1};
\node at (7.4,0) {2};
\node at (6,2.1) {3};
\node at (6,-0.7) {$g_{1}$};

%%% edges
\draw[thick] (5,0) -- (7,0) -- (6,1.73);

%% vertices
\draw[fill=black] (8.5,0) circle (2pt);
\draw[fill=black] (10.5,0) circle (2pt);
\draw[fill=black] (9.5,1.73) circle (2pt);

%% vertex labels
\node at (8.1,0) {2};
\node at (10.9,0) {1};
\node at (9.5,2.1) {3};
\node at (9.5,-0.7) {$g_2$};

%%% edges
\draw[thick] (8.5,0) -- (10.5,0) -- (9.5,1.73);

%% vertices
\draw[fill=black] (12,0) circle (2pt);
\draw[fill=black] (14,0) circle (2pt);
\draw[fill=black] (13,1.73) circle (2pt);

%% vertex labels
\node at (11.6,0) {1};
\node at (14.4,0) {3};
\node at (13,2.1) {2};
\node at (13,-0.7) {$g_3$};

%%% edges
\draw[thick] (12,0) -- (14,0) -- (13,1.73);

\end{tikzpicture}}
         \caption{Given $s$, each of the three labelings is equally likely.}
         \label{fig:deanony}
     \end{subfigure}
        \caption{A toy example}
        \label{fig:anony:example}
\end{figure}
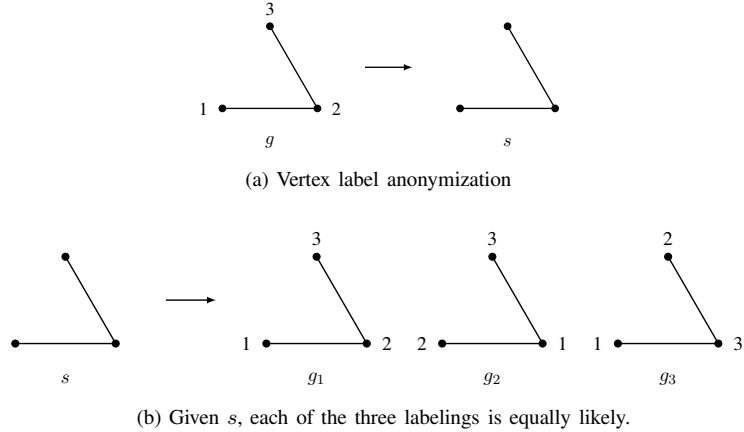

\subsection{Correlated Erd\H{o}s-R\'enyi (CER) Random Graphs}
A fundamental random graph model that is commonly used \cite{cullina_improved, shirani_correlated_graph_matching} in the study of correlated graphs is the \emph{Correlated Erd\H{o}s-R\'enyi (CER)} model. Let $\ga$ and $\gb$ be a pair of correlated weighted graphs with the same vertex set $[n]$. Assume that $[k_a]$ and $[k_b]$ are the sets from which the weight functions $w_a$ and $w_b$ of graphs $G_a$ and $G_b$ take values, respectively. Let $\mathbf{A}=(a_{i,j})$ and $\mathbf{B}=(b_{i,j})$ be the adjacency matrices of $\ga$ and $\gb$, respectively. 
Given a joint distribution
$P_{\ea, \eb}$ on $\{0,1,\ldots, k_a\} \times \{0,1, \ldots, k_b\}$,  the joint distribution of graphs $(\ga, \gb)$ under the CER model is given by 
\begin{align}
    P_{\ga, \gb}(\gga, \ggb) \triangleq \prod_{i<j} P_{\ea, \eb}(\eea_{i,j}, \eeb_{i,j}).
\end{align} 
 In this model, the way in which two vertices $i$ and $j$ are connected in $\ga$ is correlated with how they are connected in $\gb$, and the connections are independent across the vertex pairs in both $\ga$ and $\gb$.

We denote by $\sta$ and $\stb$ the structures of the graphs $\ga$ and $\gb$, respectively. As structures are functions of labelled graphs, the CER model induces the joint distribution $P_{\ga, \gb, \sta, \stb}(\gga, \ggb, \starel, \stbrel) =  P_{\ga, \gb}(\gga, \ggb) \mathds{1}(\gga \cong \starel)\mathds{1}(\ggb \cong \stbrel).
$

\subsection{Problem Formulation}

In this section, we formulate the problem of graph compression with side information at the decoder for the CER model. We will consider the compression of a labelled graph $\ga$ or an unlabelled graph $\sta$  with the correlated side information $\gb$ or  $\stb$ at the decoder.  We use the notation $\ua$ to denote the random object that will be compressed, and the notation $\ub$ to denote the correlated side information at the decoder. In our case, $\ua$ will be either $\ga$ or $\sta$, and $\ub$ will be either $\gb$ or $\stb$.

Let us formally define the graph compression problem with side information at the decoder. Suppose that we are given $(\ua, \ub)$ with joint distribution $P_{\ua, \ub}$. Here, the number of vertices in $\ua$ and $\ub$ is $n$. Though we omitted it in the notation, the random objects implicitly depend on $n$. Our goal is to determine the minimum compression rate possible for encoding $\ua$ with side information $\ub$ at the decoder.
An \emph{encoder} $\phi_n$ operating at a rate $R \geq 0$ is a deterministic function whose input is $\ua$ and whose output is a message $M_n \in \big[2^{{\binom{n}{2}R}}\big]$, i.e., $M_n=\phi_n(\ua)$. A \emph{decoder} $\psi_n$ is a deterministic function that takes a message  $M_n  \in \big[2^{{\binom{n}{2}R}}\big]$ and $\ub$ as inputs, and produces a graph $\hua$ as the output, i.e., $\hua = \psi_n(M_n, \ub)$. We say that  $R$ is an \emph{achievable compression rate} if there exists a sequence of encoder and decoder pairs $\{(\phi_n, \psi_n )\}_{n \geq 1}$ such that 
\begin{align}
    \mathbb{P}\left(\hua \neq \ua\right) = \mathbb{P}\left(\psi_n(\phi_n(\ua),\ub) \neq \ua \right) \rightarrow 0
\end{align}
as $n \to \infty$.
The \emph{minimum achievable compression rate} $R^*$ is defined as
\begin{align}
    R^*\triangleq \inf\{R: R \text{ is an achievable compression rate} \}.
\end{align}
This model is illustrated in Fig.~\ref{fig:sys_model}.

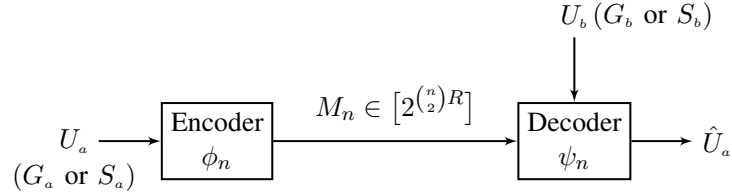
\begin{figure}[t]
\centering
\resizebox{0.95\width}{!}{\tikzstyle{block}=[rectangle, draw, thick, minimum width=2em, minimum height=2em]

\begin{tikzpicture}[node distance=4cm,auto,>=latex']
    
    \node (dummy) {};
    \node [block, align=center] (enc) [left of=dummy, node distance=2.5cm] {Encoder\\$\phi_n$};
     \node [block, align=center] (dec) [right of=dummy, node distance=2.5cm] {Decoder\\$\psi_n$};
     % \node [block, align=center] (gen) [above of=dummy, node distance=3cm] {CER Model \\ $(\ga,\gb)$};
     \node (u) [left of=enc, node distance=2cm] {$\ua$};
     \node (u1) [below of=u, node distance=0.5cm] {$(\ga \text{ or } \sta)$};
     \node (hatu) [right of=dec, node distance=2cm] {$\hua$};
     % \node [block, align=center] (perm) [above of=dec, node distance=3cm] {Vertex-Label\\ Anonymization};
     % \node [block, align=center] (perm1) [above of=u, node distance=3cm] {Vertex-Label\\ Anonymization};
     \node (m) [above of=dummy, node distance=0.5cm] {$M_n \in \big[2^{{\binom{n}{2}R}}\big]$};
      \node (v) [above of=dec, node distance=1.75cm] {$\ub$};
      \node (v1) [right of=v, node distance=1.1cm] {$(\gb \text{ or } \stb)$};
     
     \draw[->, thick] (u.east) --  (enc.west);
     \draw[->, thick] (enc.east) --  (dec.west);
     \draw[->, thick] (dec.east) --  (hatu.west);
     \draw[->, thick] (v.south) --   (dec.north) ;
     % \draw[->, thick] (gen.east) -- node[above]{$\gb$} (perm.west)  ;
     % \draw[->, thick] (gen.west) -- node[above]{$\ga$} (perm1.east)  ;
     % \draw[->, thick] (perm1.south) --  (u.north);
     % \draw[dashed] (gen.west) --  (dummy2);
     % \draw[->, dashed] (dummy2) --  (g1.north);
     %\draw[->, dashed] (gen.west) -|  (u.north);
\end{tikzpicture}}
\caption{Graph compression with side information at the decoder. The graphs $(\ga,\gb)$ are correlated Erd\H{o}s-R\'enyi graphs, and $(\sta, \stb)$ are the respective structures. $\ua$ can be $\ga$ or $\sta$, and $\ub$ can be $\gb$ or $\stb$.  The encoder compresses $\ua$ into a message $M_n$ of rate $R$. Using the correlated side information $\ub$, the decoder tries to recover $\ua$.}
\label{fig:sys_model}
 \end{figure}

\section{Minimum Compression Rate}
\label{sec:min_comp_rate}
\subsection{A Candidate for $R^*$}
If we would like to compress the labelled graph $\ua=\ga$ with the side information $\ub=\gb$ at the decoder, then we can consider the entries of the respective adjacency matrices $(A_{i,j}:1\leq i < j \leq n)$ and $(B_{i,j}:1\leq i < j \leq n)$, where the pairs $(A_{i,j}, B_{i,j})$ are i.i.d. with distribution $P_{\ea, \eb}$.  These are correlated strings of length $\binom{n}{2}$ with independence across the symbols. We can now use the classical source coding with side information result \cite{slepian_wolf} for memoryless sources to conclude that the minimum achievable compression rate is 
$$R^* = H(\ea | \eb ),$$
 which is equal to $\lim \limits_{n \to \infty} \frac{1}{\binom{n}{2}}H(\ga|\gb)$. 
 
On the other hand, if either of $\ua$ and $\ub$ is a structure, then we can no longer exploit the independence. Nevertheless, it is natural to expect that the minimum achievable compression rate should be 
\begin{align}
\lim \limits_{n \to \infty} \frac{1}{\binom{n}{2}}H(\ua|\ub) \label{eq:limit}
\end{align}
assuming that the limit exists. In the next section, we give a characterization of this limit.

\subsection{Characterization of the Limit in \eqref{eq:limit}}
Before we proceed to characterize the limit in \eqref{eq:limit}, we make a note of a few facts about the entropy \cite{choi_structural_entropy} of an Erd\H{o}s-R\'enyi (ER) random graph $G$ and its structure $S$. It is intuitively clear that most of the randomness in a graph must come from its structure as it retains the information about $\binom{n}{2}$ connections, while ignoring the information about $n$ vertex labels. The following  expressions precisely quantify this intuition. If $P_A$ stands for the probability distribution of an entry of the adjacency matrix under the ER model, then 
\begin{align}
    H(G) &= \binom{n}{2}H(A) \qquad \text{and} \label{eq:ent:g}
    \\ H(S) &= \binom{n}{2}H(A) -n\log n + \mathcal{O}(n). \label{eq:ent:struct}
\end{align}
The structural entropy \eqref{eq:ent:struct} of weighted graphs can easily be obtained by essentially following the arguments of \cite{choi_structural_entropy} used in the case of unweighted graphs. The key result used in \cite{choi_structural_entropy} to compute the structural entropy is that $|\operatorname{Aut}(G)|$ of an Erd\H{o}s-R\'enyi graph is $1$ (i.e, $G$ is asymmetric) with probability going to $1$ as $n \to \infty$.

\begin{lemma}\label{lem:asymp:limit}
Let $(\ga, \gb)$ be correlated Erd\H{o}s-R\'enyi random graphs specified by the distribution $P_{\ea, \eb}$. Let $\sta$ and $\stb$ be the respective structures of $\ga$ and $\gb$. Then
\begin{align}\label{eq:asymp:cond:struct}
    \lim_{n \to \infty}\frac{1}{\binom{n}{2}}H(\ga|\stb)= H(\ea | \eb )
\end{align}  
Moreover, $$\lim_{n \to \infty}\frac{1}{\binom{n}{2}}H(\ga|\gb)=\lim_{n \to \infty}\frac{1}{\binom{n}{2}}H(\sta|\gb)= \lim_{n \to \infty}\frac{1}{\binom{n}{2}}H(\sta|\stb)= H(\ea | \eb ).$$
\end{lemma}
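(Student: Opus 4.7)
The plan is to bound each of the three conditional entropies from above and below by $\binom{n}{2} H(\ea|\eb)$ up to additive corrections of order $n\log n$, which are negligible after dividing by $\binom{n}{2}$. The main ingredients are already available: in the CER model one has $H(\ga, \gb) = \binom{n}{2} H(\ea, \eb)$ and $H(\gb) = \binom{n}{2} H(\eb)$, so $H(\ga|\gb) = \binom{n}{2} H(\ea|\eb)$; the structural entropy formula \eqref{eq:ent:struct} together with \eqref{eq:ent:g} gives $H(\gb) - H(\stb) = n\log n + \mathcal{O}(n)$; and for any fixed structure $s$ there are at most $n!$ labelled graphs $g$ with $g \cong s$, so $H(\ga \mid \sta) \leq \log n! = n\log n + \mathcal{O}(n)$ (and similarly for $\gb, \stb$).

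For $H(\ga|\stb)$, I would use conditioning-reduces-entropy for the lower bound, $H(\ga|\stb) \geq H(\ga|\stb, \gb) = H(\ga|\gb) = \binom{n}{2} H(\ea|\eb)$, and add $\gb$ on the conditioned side for the upper bound:
\begin{align*}
H(\ga|\stb) \leq H(\ga, \gb \mid \stb) = H(\gb \mid \stb) + H(\ga \mid \gb) \leq n\log n + \mathcal{O}(n) + \binom{n}{2} H(\ea|\eb).
\end{align*}
For $H(\sta|\gb)$, since $\sta$ is a function of $\ga$, the upper bound $H(\sta|\gb) \leq H(\ga|\gb) = \binom{n}{2} H(\ea|\eb)$ is immediate, while the lower bound follows by writing
\begin{align*}
H(\sta|\gb) = H(\ga, \gb) - H(\ga \mid \sta, \gb) - H(\gb) \geq \binom{n}{2} H(\ea|\eb) - H(\ga|\sta) \geq \binom{n}{2} H(\ea|\eb) - n\log n - \mathcal{O}(n).
\end{align*}

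For the last identity $H(\sta|\stb)$, the upper bound follows by combining the first case with the data-processing step $H(\sta|\stb) \leq H(\ga|\stb)$ (since $\sta$ is a function of $\ga$). For the lower bound I would write $H(\sta|\stb) = H(\sta, \stb) - H(\stb)$ and bound $H(\sta, \stb)$ from below via
\begin{align*}
H(\sta, \stb) = H(\ga, \gb) - H(\ga, \gb \mid \sta, \stb) \geq \binom{n}{2} H(\ea, \eb) - 2\log n!,
\end{align*}
where the last inequality uses the counting estimate that given $(\sta, \stb)$ the pair $(\ga, \gb)$ takes at most $(n!)^2$ values. Combining with $H(\stb) = \binom{n}{2} H(\eb) - n\log n + \mathcal{O}(n)$ yields $H(\sta|\stb) \geq \binom{n}{2} H(\ea|\eb) - n\log n - \mathcal{O}(n)$.

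In all three cases, dividing the resulting sandwich bounds by $\binom{n}{2}$ and using $n\log n / \binom{n}{2} \to 0$ gives the claimed limit $H(\ea|\eb)$. I do not anticipate a substantial obstacle: the only place where the argument departs from the single-structure entropy calculation of \cite{choi_structural_entropy} is the joint conditional $H(\ga, \gb \mid \sta, \stb)$, whose analysis would be delicate if one wanted an exact expression (the conditional distribution of $(\ga, \gb)$ given $(\sta, \stb)$ is not uniform because of the edge-wise correlation), but for our purposes the crude counting bound $2\log n!$ already provides a correction of the negligible order $n\log n$.
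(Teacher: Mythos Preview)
Your proposal is correct and follows essentially the same strategy as the paper: sandwich each conditional entropy between $\binom{n}{2}H(\ea|\eb)$ and $\binom{n}{2}H(\ea|\eb)+\mathcal{O}(n\log n)$ using $H(\gb|\stb)=n\log n+\mathcal{O}(n)$ and $H(\ga|\sta)\leq\log n!$, then divide by $\binom{n}{2}$. The only cosmetic difference is that for $H(\sta|\stb)$ the paper obtains the lower bound by chaining the inequality $H(\sta|\gb)\leq H(\sta|\stb)$ to the already-established bound on $H(\sta|\gb)$, whereas you argue directly via $H(\sta,\stb)\geq H(\ga,\gb)-2\log n!$; both routes yield the same $\mathcal{O}(n\log n)$ correction.
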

\begin{proof}
Consider the relation $H(\ga|\gb) = H(\ga|\gb, \stb) =   H(\ga|\stb) - I(\ga ; \gb |\stb),$ and use the inequality $0 \leq I(\ga ; \gb |\stb) \leq H(\gb|\stb)$ to get 
\begin{align}\label{ineq:cnd}
     H(\ga|\gb) \leq H(\ga|\stb) \leq  H(\ga|\gb) + H(\gb|\stb).
\end{align}
Since the structure $\stb$ is a function of the graph $\gb$, we have 
\begin{align}
    H(\gb|\stb)= H(\gb)- H(\stb) &= \binom{n}{2}H(\eb)- \left[\binom{n}{2}H(\eb) -n\log n + \mathcal{O}(n)\right]\nonumber \\
    &= n\log n -\mathcal{O}(n),
\end{align} 
where, in the second equality,  we used \eqref{eq:ent:g} and \eqref{eq:ent:struct}.  Because of the  independence across the edges in the Erd\H{o}s-R\'enyi model, $H(\ga|\gb)= \sum_{i<j}H(A_{i,j}| B_{i,j})= \binom{n}{2}H(\ea | \eb )$. Thus, the inequalities in \eqref{ineq:cnd} become
$$  \binom{n}{2}H(\ea | \eb ) \leq H(\ga|\stb) \leq  \binom{n}{2}H(\ea | \eb ) + n\log n -O(n),$$
which readily yields the limit  \eqref{eq:asymp:cond:struct} as $\frac{n\log n}{\binom{n}{2}} \to 0$.  

The other limits in the lemma statement follow similarly by noting that
\begin{align*}
     &H(\ga|\gb)-H(\ga|\sta) \leq H(\sta|\gb) \leq  H(\ga|\gb) \quad \text{ and }\\
     &H(\sta|\gb) \leq H(\sta|\stb) \leq  H(\ga|\stb).
\end{align*}
\end{proof}
The work of  \cite{nikpey_graph_side} considered the graph compression problem in the case when $(\ga, \gb)$ are independently sampled from an underlying Erd\H{o}s-R\'enyi unweighted graph $G$ with parameter $p$. In particular, $\ga$ (or $\gb$) is obtained by sampling each edge in $G$ with probability $\gamma$. This is an instance of the correlated Erd\H{o}s-R\'enyi model with the distribution 
\begin{align*}
    P_{\ea, \eb}(0,0) &= (1-p)+p(1-\gamma)^2\\
    P_{\ea, \eb}(0,1) &= p\gamma(1-\gamma)\\
    P_{\ea, \eb}(1,0) &= p\gamma(1-\gamma)\\
    P_{\ea, \eb}(1,1) &=  p\gamma^2.
\end{align*}
It was shown in \cite{nikpey_graph_side} that 
\begin{align}
    R^* \leq  H(\ea)-\delta = h_2(p\gamma) - \delta, \label{ineq:nikpey:achievability}
\end{align}
where $\delta=2\frac{s^2\sigma^2}{(s+2)^2}$ with $\sigma^2= (p\gamma)^2(1-(p\gamma)^2)$ and $s = \min\left\{1, \frac{1-\epsilon-p}{(1-\epsilon+p)(1-(p\gamma)^2)}\right\}$ for a constant $0 < \epsilon <1-p$. On the other hand,
$$H(\ea | \eb) = p\gamma h_2(\gamma)+(1-p\gamma)h_2\left(\frac{p\gamma(1-\gamma)}{1-p\gamma}\right),$$
which is believed to be the minimum achievable compression rate $R^*$. It can be easily seen that  $H(\ea)-\delta$ is strictly larger than $H(\ea | \eb)$ for certain parameters; for instance, by setting $p=1/2$ and $\gamma=1$, we see that $H(\ea | \eb)=0$ and $H(\ea)-\delta=1- \frac{3}{8}\left(\frac{s}{s+2}\right)^2>0$.

\subsection{Main Results}
In order to prove our main result on what is the minimum achievable compression rate in the graph compression problem, we need the following auxiliary results. The first result is concerned with the asymptotic behavior of the maximum of $\pgagb(\ga, \pi(\gb))$ over all permutations $\pi$ of vertex labels under the CER model. We know  that if $\pi$ is the identity permutation, then by the weak law of large numbers,
\begin{align}\label{eq:asymp:conv:prob:noperm}
    \frac{1}{\binom{n}{2}} \log \frac{1}{ \pgagb(\ga, \gb)} &= \frac{1}{\binom{n}{2}} \sum_{i<j} \log \frac{1}{ P_{\ea, \eb}(\ea_{i,j}, \eb_{i,j})} \longrightarrow  H(\ea, \eb)
\end{align}
in probability, as $n \to \infty$. The following theorem shows that 
$$\frac{1}{\binom{n}{2}} \log \frac{1}{\max \limits_{\pi \in \mathcal{S}_n} \pgagb(\ga, \pi(\gb))}$$ and $$\frac{1}{\binom{n}{2}} \log \frac{1}{\max \limits_{\pi_{a}, \pi_{b} \in \mathcal{S}_n} \pgagb(\pi_{a}(\ga), \pi_{b}(\gb))}$$ also converge to the same limit $H(\ea, \eb)$, which means that the random variables $\max \limits_{\pi \in \mathcal{S}_n} \pgagb(\ga, \pi(\gb))$, $\max \limits_{\pi_{a}, \pi_{b} \in \mathcal{S}_n} \pgagb(\pi_{a}(\ga), \pi_{b}(\gb))$ and $\pgagb(\ga, \gb)$ are close by in the above sense.

\begin{theorem}\label{thm:asymp:prob:gen}
    For  a pair of correlated Erd\H{o}s-R\'enyi random graphs $(\ga,\gb)$,   
      \begin{align}\label{eq:asymp:conv:prob}
    \frac{1}{\binom{n}{2}} \log \frac{1}{\max \limits_{\pi \in \mathcal{S}_n} \pgagb(\ga, \pi(\gb))}  \longrightarrow  H(\ea, \eb)
\end{align}
and 
\begin{align}\label{eq:asymp2:conv:prob}
    \frac{1}{\binom{n}{2}} \log \frac{1}{\max \limits_{\pi_{a}, \pi_{b} \in \mathcal{S}_n} \pgagb(\pi_{a}(\ga), \pi_{b}(\gb))}  \longrightarrow  H(\ea, \eb)
\end{align}
in probability and almost surely as $n \to \infty$.
\end{theorem}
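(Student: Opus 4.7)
The plan is to prove the first convergence by matching a trivial bound from the identity permutation with an upper bound on $\max_\pi \pgagb(\ga,\pi(\gb))$ that replaces it by the joint $\pgasb(\ga,\stb)$, then establish concentration of $\pgasb(\ga,\stb)$ via a Chernoff-type estimate on its $(1+s)$-moment. The second claim follows by the same recipe, with Jensen applied twice.

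For the $\limsup$ direction, taking $\pi=\mathrm{id}$ gives $\max_\pi \pgagb(\ga,\pi(\gb))\geq \pgagb(\ga,\gb)$, so
\[
\frac{1}{\binom{n}{2}}\log\frac{1}{\max_\pi \pgagb(\ga,\pi(\gb))} \leq \frac{1}{\binom{n}{2}}\sum_{i<j}-\log P_{\ea,\eb}(A_{i,j},B_{i,j}),
\]
and Kolmogorov's strong law of large numbers drives the right-hand side to $H(\ea,\eb)$ almost surely, hence also in probability. For the $\liminf$ direction I would trade the max for a sum: grouping permutations according to the graph they produce gives
\[
\max_\pi \pgagb(\ga,\pi(\gb)) \leq \sum_{\pi\in\mathcal{S}_n}\pgagb(\ga,\pi(\gb)) = |\operatorname{Aut}(\gb)|\,\pgasb(\ga,\stb).
\]
Since $|\operatorname{Aut}(\gb)|\leq n!$ holds deterministically, $\log|\operatorname{Aut}(\gb)|/\binom{n}{2}\leq n\log n/\binom{n}{2}\to 0$, and it suffices to show $-\log\pgasb(\ga,\stb)/\binom{n}{2}\geq H(\ea,\eb)-\epsilon$ with summably small failure probability for every $\epsilon>0$.

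The main obstacle is controlling $\pgasb(\ga,\stb)$, which---unlike $\pgagb$---is not a product. The key step is the power-mean inequality $\bigl(\sum_{i=1}^m x_i\bigr)^{1+s}\leq m^s\sum_{i=1}^m x_i^{1+s}$, applied with $m\leq n!$ to $\pgasb(g_a,s_b)=\sum_{g_b\cong s_b}\pgagb(g_a,g_b)$. Summing over $(g_a,s_b)$ and invoking the product form of $\pgagb$ under the CER model yields
\[
\mathbb{E}\!\left[\pgasb(\ga,\stb)^s\right]\leq (n!)^s\,\alpha(s)^{\binom{n}{2}},\qquad \alpha(s):=\sum_{a,b}P_{\ea,\eb}(a,b)^{1+s}.
\]
Markov's inequality then gives
\[
\mathbb{P}\!\left(\pgasb(\ga,\stb)\geq 2^{-\binom{n}{2}(H(\ea,\eb)-\epsilon)}\right) \leq 2^{s\log n!+\binom{n}{2}\log\alpha(s)+s\binom{n}{2}(H(\ea,\eb)-\epsilon)}.
\]
Since $\log\alpha(s)=-sH(\ea,\eb)+O(s^2)$ near $s=0$, choosing $s>0$ small but constant reduces the exponent to $sn\log n+\binom{n}{2}\bigl(O(s^2)-s\epsilon\bigr)$, which is at most $-c\binom{n}{2}$ for some $c>0$ once $n$ is large, because $n\log n=o(\binom{n}{2})$. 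This contrast between the sub-exponential labeling ambiguity and the exponential concentration of $\pgagb$ is the heart of the argument; summability of these tail probabilities lets Borel--Cantelli upgrade convergence in probability to almost-sure convergence. The second statement is handled identically after noting $\sum_{\pi_a,\pi_b}\pgagb(\pi_a(\ga),\pi_b(\gb))=|\operatorname{Aut}(\ga)||\operatorname{Aut}(\gb)|\psasb(\sta,\stb)$: Jensen applied twice contributes only an additional $(n!)^s$ factor to the moment bound, which is still logarithmically negligible, and the same Chernoff argument goes through.
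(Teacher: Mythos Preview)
Your argument is correct, but it takes a different route from the paper's. The paper avoids moments and R\'enyi-type quantities entirely: for a fixed permutation $\pi$ it bounds $\mathbb{P}\bigl(\pgagb(\ga,\pi(\gb))>2^{\binom{n}{2}\delta}\pgagb(\ga,\gb)\bigr)$ directly by the change-of-measure inequality $\sum_{g_a,g_b}\pgagb(g_a,g_b)\mathds{1}_{\mathcal{A}_\pi}<2^{-\binom{n}{2}\delta}\sum_{g_a,g_b}\pgagb(g_a,\pi(g_b))=2^{-\binom{n}{2}\delta}$, then union-bounds over the $n!$ permutations and combines with Hoeffding for the identity term. This is more elementary than your Chernoff scheme---no power-mean inequality, no Taylor expansion of $\log\alpha(s)$, no optimization over $s$---and it gives a clean exponent of $\binom{n}{2}\delta-n\log n$ with no implicit constants. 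Your approach, on the other hand, packages the permutation sum as $\pgasb$ and works with its $(1+s)$-moment, which is a more portable large-deviations template and connects naturally to R\'enyi entropy; it would generalize more readily if one wanted finer exponents.

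For the second convergence the paper is considerably slicker: it observes that $\pgagb(\pi_a(\ga),\pi_b(\gb))=\pgagb(\ga,\pi_a^{-1}\circ\pi_b(\gb))$ by the edge-exchangeability of the CER model, so $\max_{\pi_a,\pi_b}\pgagb(\pi_a(\ga),\pi_b(\gb))=\max_{\pi}\pgagb(\ga,\pi(\gb))$ as random variables, and \eqref{eq:asymp2:conv:prob} is literally the same statement as \eqref{eq:asymp:conv:prob}. Your double-Jensen repetition works but misses this one-line reduction.
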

\begin{proof}
We will first prove \eqref{eq:asymp:conv:prob}. Fix an integer $n \geq 1$ and a permutation $\pi \in \mathcal{S}_n$ of the vertex labels. For a fixed $\delta>0$, let  $\mathcal{A}_{\pi}$ be the event that $\pgagb(\ga, \pi(\gb))>2^{\binom{n}{2}\delta}\pgagb(\ga, \gb)$. It follows that 
\begin{align}  
\mathbb{P}\left(\mathcal{A}_{\pi}\right)&= \sum_{\gga, \ggb}\pgagb\left(\gga, \ggb\right)\mathds{1}_{\mathcal{A}_{\pi}} \nonumber\\ &
< 2^{-\binom{n}{2} \delta} \sum_{\gga, \ggb} \pgagb(\gga, \pi(\ggb)),\nonumber
\end{align}
where the last inequality uses the definition of the event $\mathcal{A}_{\pi}$ and the fact that $0\leq 2^{-\binom{n}{2}\delta}\pgagb(\gga, \pi(\ggb))$ on $\mathcal{A}^c_{\pi}$. Since the map $\ggb \mapsto \pi(\ggb)$ is a bijective function, we have 
$$\sum_{\gga, \ggb}\pgagb(\gga, \pi(\ggb)) = \sum_{\gga, \pi^{-1}(\ggb)}\pgagb(\gga, \ggb)= 1, $$ and therefore, 
\begin{align}
\mathbb{P}\left(\mathcal{A}_{\pi}\right) &< 2^{-\binom{n}{2}\delta}. \label{eq:perm:error}
\end{align}

Let $\mathcal{A}$ be the event that $\max_{\pi \in \mathcal{S}_n}\pgagb(\ga, \pi(\gb))>2^{\binom{n}{2}\delta}\pgagb(\ga, \gb)$, which is $\cup_{\pi \in \mathcal{S}_n}\mathcal{A}_{\pi}$. Then by the union bound, $    \mathbb{P}\left(\mathcal{A}\right) \leq \sum_{\pi \in \mathcal{S}_n}\mathbb{P} \left(\mathcal{A}_{\pi}\right)$. By using \eqref{eq:perm:error} and the fact that  there are only $n!$ permutations, we obtain
\begin{align} \label{eq:bound:conv:prob}    \mathbb{P}\left(\mathcal{A}\right) < n! \cdot 2^{-\binom{n}{2}\delta}  \leq  2^{n \log n} \cdot 2^{-\binom{n}{2}\delta}.
\end{align}
Let $\mathcal{B}$ be the event  
$$\left\lvert\frac{1}{\binom{n}{2}} \log  \frac{1}{\max \limits_{\pi \in \mathcal{S}_n} \pgagb(\ga, \pi(\gb))} -  H(\ea, \eb)\right\rvert>2\delta,$$
and ${\mathcal{B}}'$ be the event  
$$\left\lvert\frac{1}{\binom{n}{2}} \log  \frac{1}{\pgagb(\ga, \gb)} -  H(\ea, \eb)\right\rvert>\delta.$$
Since $\pgagb(\ga, \gb) \leq \max \limits_{\pi \in \mathcal{S}_n}\pgagb(\ga, \pi(\gb)) \leq 2^{\binom{n}{2}\delta}\pgagb(\ga, \gb)$ on $\mathcal{A}^c$, it can be easily verified that $\mathcal{B}\cap \mathcal{A}^c \subseteq \mathcal{B}' \cap \mathcal{A}^c$. This implies that
\begin{align}
     \mathbb{P}\left(\mathcal{B}\cap \mathcal{A}^c\right) 
  \leq \mathbb{P}\left(\mathcal{B}'\cap \mathcal{A}^c\right)
  % \nonumber\\
    & \leq \mathbb{P}\left(\left\lvert\frac{1}{\binom{n}{2}} \log  \frac{1}{ \pgagb(\ga, \gb)} -  H(\ea, \eb)\right\rvert>\delta \right)\nonumber\\
    &  \leq 2^{-\frac{2 \delta^2 \binom{n}{2}}{(\log C)^2}}, \label{eq:bound:conv:prob2}
\end{align}
where $C$ is the smallest non-zero value of the joint distribution $P_{\ea, \eb}$, 
 and the last inequality follows by applying Hoeffding's inequality \cite[Theorem~2.8]{boucheron_book} to the sum $\frac{1}{\binom{n}{2}} \sum_{i<j} \log \frac{1}{ P_{\ea, \eb}(\ea_{i,j}, \eb_{i,j})}$ consisting of i.i.d random variables with $\log \frac{1}{ P_{\ea, \eb}(\ea_{i,j}, \eb_{i,j})} \in \left[0,\log \frac{1}{C}\right]$ with probability 1.
Therefore, by combining  \eqref{eq:bound:conv:prob} and \eqref{eq:bound:conv:prob2}, we obtain
\begin{align}
\mathbb{P}\left(\mathcal{B}\right) 
% &= \mathbb{P}\left(\mathcal{B} \cap \mathcal{A}\right) + \mathbb{P}\left(\mathcal{B}\cap \mathcal{A}^c\right) \nonumber\\
&\leq  \mathbb{P}\left( \mathcal{A}\right) + \mathbb{P}\left(\mathcal{B}\cap \mathcal{A}^c\right) 
% \nonumber\\ & 
\leq 2^{-\left[\binom{n}{2}\delta- n \log n\right]} + 2^{-\frac{2 \delta^2 \binom{n}{2} }{(\log C)^2}}, \label{eq:bound:conv:prob3}
\end{align}
proving the convergence of \eqref{eq:asymp:conv:prob} in probability. As the bound in \eqref{eq:bound:conv:prob3} is summable, the Borel-Cantelli lemma implies the almost sure convergence of \eqref{eq:asymp:conv:prob}.

For the proof of \eqref{eq:asymp2:conv:prob}, observe that 
$$\max \limits_{\pi_{a}, \pi_{b} \in \mathcal{S}_n} \pgagb(\pi_{a}(\ga), \pi_{b}(\gb)) = \max \limits_{\pi \in \mathcal{S}_n} \pgagb(\ga, \pi(\gb)).$$
This is simply because for $\pi_{a}, \pi_{b} \in \mathcal{S}_n$, 
\begin{align}
    \pgagb(\pi_{a}(\ga), \pi_{b}(\gb))
    &= \prod_{i<j} P_{\ea, \eb}(\ea_{\pi_{a}^{-1}(i), \pi_{a}^{-1}(j)}, \eb_{\pi_{b}^{-1}(i), \pi_{b}^{-1}(j)})\nonumber\\
    & = \prod_{i'<j'} P_{\ea, \eb}(\ea_{i',j'}, \eb_{\pi_{b}^{-1}\circ \pi_{a}(i'), \pi^{-1}_b\circ \pi_{a}(j')})\label{eq:symm}\\
    &=\pgagb(\ga, \pi^{-1}_a \circ\pi_{b}(\gb)),\nonumber
\end{align}
where the equality \eqref{eq:symm} uses the symmetry of the adjacency matrices.
\end{proof}

 The above result will be helpful in handling the probabilities involving structures. The next theorem, which concerns the asymptotic behaviour of the conditional probabilities involving structures, is a key ingredient in showing that $H(A|B)$ is indeed the minimum achievable compression rate $R^*$.

\begin{theorem}\label{thm:asymp:inprob:weighted}
Let $(\ga,\gb)$ be a pair of correlated Erd\H{o}s-R\'enyi random graphs. Let $\sta$ and $\stb$ be the structures of the graphs $\ga$ and $\gb$, respectively. Then,
    \begin{align} \label{eq:asymp:inprob:weighted}
     \frac{1}{\binom{n}{2}}\log\frac{1}{\pgacsb(\ga| \stb)}  \longrightarrow  H(\ea | \eb )   
\end{align}
in probability as $n \to \infty$. In addition, the random variables $\frac{1}{\binom{n}{2}}\log\frac{1}{\pgacgb(\ga| \gb)}, \frac{1}{\binom{n}{2}}\log \frac{1}{\psacgb(\sta| \gb)}$ and  $\frac{1}{\binom{n}{2}}\log\frac{1}{\psacsb(\sta| \stb)}$ also converge in probability to $H(\ea | \eb )$.
\end{theorem}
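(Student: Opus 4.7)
The plan is to reduce each conditional probability involving a structure to an average of ordinary conditional probabilities $\pgacgb$ over permutations of vertex labels, and then sandwich this average between its identity-permutation term and its maximum. The weak law of large numbers on the i.i.d.\ sum supplied by the identity term and Theorem~\ref{thm:asymp:prob:gen} on the maximum term should yield matching limits $H(\ea|\eb)$.

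First I would establish the key identity
\begin{align*}
\pgacsb(\ga|\stb) = \frac{1}{n!}\sum_{\pi \in \mathcal{S}_n}\pgacgb(\ga|\pi(\gb)).
\end{align*}
This follows from orbit-stabilizer: since $\pgb$ is invariant under vertex relabellings in the Erd\H{o}s-R\'enyi model, $P_{\stb}(\stb) = \frac{n!}{|\operatorname{Aut}(\stb)|}\pgb(\gb)$ and $\pgasb(\ga,\stb) = \frac{1}{|\operatorname{Aut}(\stb)|}\sum_{\pi}\pgagb(\ga,\pi(\gb))$. The identity-permutation term gives $\pgacsb(\ga|\stb) \geq \frac{1}{n!}\pgacgb(\ga|\gb)$, while bounding every summand by the maximum and using $\pgb(\pi(\gb)) = \pgb(\gb)$ gives $\pgacsb(\ga|\stb) \leq \max_{\pi}\pgagb(\ga,\pi(\gb))/\pgb(\gb)$. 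Taking $\frac{1}{\binom{n}{2}}\log\frac{1}{(\cdot)}$, the lower bound produces an upper bound of $\frac{\log n!}{\binom{n}{2}} + \frac{1}{\binom{n}{2}}\log\frac{1}{\pgacgb(\ga|\gb)}$ that tends in probability to $0 + H(\ea|\eb)$ by the WLLN on the i.i.d.\ terms $\log\frac{1}{P_{\ea|\eb}(A_{i,j}|B_{i,j})}$. The upper bound produces a lower bound of
\begin{align*}
\frac{1}{\binom{n}{2}}\log\frac{1}{\max_{\pi}\pgagb(\ga,\pi(\gb))} - \frac{1}{\binom{n}{2}}\log\frac{1}{\pgb(\gb)},
\end{align*}
which tends in probability to $H(\ea,\eb) - H(\eb) = H(\ea|\eb)$ by \eqref{eq:asymp:conv:prob} and the WLLN. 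The sandwich establishes \eqref{eq:asymp:inprob:weighted}.

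For the remaining two limits, the same program works with the analogous orbit-stabilizer decompositions
\begin{align*}
\psacgb(\sta|\gb) &= \frac{1}{|\operatorname{Aut}(\sta)|}\sum_{\pi}\pgacgb(\pi(\ga)|\gb),\\
\psacsb(\sta|\stb) &= \frac{1}{|\operatorname{Aut}(\sta)|\,n!}\sum_{\pi_a,\pi_b}\pgacgb(\pi_a(\ga)|\pi_b(\gb)).
\end{align*}
The only new ingredient is the factor $|\operatorname{Aut}(\sta)|$, but since the marginal of $\ga$ is Erd\H{o}s-R\'enyi, $|\operatorname{Aut}(\sta)| = 1$ with probability tending to one \cite{choi_structural_entropy}, so it contributes negligibly on the logarithmic scale. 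The maximum over all permutations is then controlled by \eqref{eq:asymp:conv:prob} for $\psacgb(\sta|\gb)$ and by \eqref{eq:asymp2:conv:prob} for $\psacsb(\sta|\stb)$, and the sandwich argument runs exactly as above. The main obstacle in the whole argument is precisely the control of $\max_{\pi}\pgagb(\ga,\pi(\gb))$ over $n!$ permutations, and this has already been handled inside Theorem~\ref{thm:asymp:prob:gen}; what remains is routine algebra together with the WLLN.
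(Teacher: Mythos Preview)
Your proposal is correct and follows essentially the same route as the paper: write the conditional probability as an average over permutations (using $\pgb(\pi(\gb))=\pgb(\gb)$), sandwich it between the identity-permutation term and the maximum, and conclude via the weak law of large numbers together with Theorem~\ref{thm:asymp:prob:gen}. The only cosmetic difference is that the paper disposes of the factor $|\operatorname{Aut}(\sta)|$ through the crude bounds $1\le|\operatorname{Aut}(\ga)|\le n!$ (which already vanish on the $\binom{n}{2}$ scale) rather than invoking the asymmetry of Erd\H{o}s--R\'enyi graphs.
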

\begin{proof}
We will first show that \eqref{eq:asymp:inprob:weighted} holds. Since the random structure $\stb$ is a function of the random graph $\gb$, and the probability of a structure is equal to the sum of the probabilities of distinct graphs having that structure, the following relations hold with probability (w.p.) $1$:
$$\psb(\stb) = \frac{1}{|\operatorname{Aut}(\gb)|}\sum \limits_{\pi \in \mathcal{S}_n} \pgb(\pi(\gb))\quad \text{and}$$
$$\pgasb(\ga, \stb) = \frac{1}{|\operatorname{Aut}(\gb)|}\sum \limits_{\pi \in \mathcal{S}_n} \pgagb(\ga, \pi(\gb)).$$
In the above expressions, the summations are over the set of all permutations of $n$ vertex labels. So, a distinct graph may appear multiple times in the summations. In particular, the number of terms in each summation corresponding to a distinct graph is exactly $|\operatorname{Aut}(\gb)|$ number of distinct permutations. However, the  prefactor $\frac{1}{|\operatorname{Aut}(\gb)|}$ nullifies this overcounting in the summations to get the desired probability. 

Using the above expressions, the conditional probability distribution function  can be expressed as 
    \begin{align}
        \pgacsb(\ga| \stb)&= \frac{\pgasb(\ga, \stb)}{\psb(\stb)}\nonumber\\
        & =\frac{\frac{1}{|\operatorname{Aut}(\gb)|}\sum \limits_{\pi \in \mathcal{S}_n} \pgagb(\ga, \pi(\gb))}{\frac{1}{|\operatorname{Aut}(\gb)|}\sum \limits_{\pi \in \mathcal{S}_n} \pgb(\pi(\gb))} \nonumber\\
        &= \frac{\sum \limits_{\pi \in \mathcal{S}_n} \pgagb(\ga, \pi(\gb))}{\sum \limits_{\pi \in \mathcal{S}_n} \pgb(\pi(\gb))},
        % \label{eq:ratio:gen}
    \end{align}
which holds w.p. $1$. Thus
\begin{align}\label{eq:cond:expansion:general}
     \frac{1}{\binom{n}{2}} \log  \frac{1}{\pgacsb(\ga| \stb)} \nonumber &=  \frac{1}{\binom{n}{2}} \log \frac{1}{\sum \limits_{\pi \in \mathcal{S}_n} \pgagb(\ga, \pi(\gb))} 
     % \nonumber\\ & \mkern 15mu 
     -  \frac{1}{\binom{n}{2}} \log \frac{1}{\sum \limits_{\pi \in \mathcal{S}_n}\pgb(\pi(\gb))}.
\end{align}
 If we can show that the first term  on the right-hand side of \eqref{eq:cond:expansion:general} converges in probability to $H(\ea, \eb)$, and the second term converges in probability to $H(\eb)$, then by the linearity of convergence in probability, we will have the result \eqref{eq:asymp:inprob:weighted}. 

Consider the second term on the right-hand side of \eqref{eq:cond:expansion:general}. As the probability 
$$\pgb(\pi(\gb)) = \prod_{i<j} P_{\eb}( \eb_{\pi^{-1}(i),\pi^{-1}(j)})=\prod_{i'<j'} P_{\eb}( \eb_{i',j'})$$
is invariant under any permutation $\pi$, $\sum_{\pi \in \mathcal{S}_n}\pgb(\pi(\gb)) = n! \pgb(\gb)$. Hence, by noting that $\frac{\log {n!}}{\binom{n}{2}} \to 0$ and by applying the weak law of large numbers, we have
\begin{align}
    \frac{1}{\binom{n}{2}} \log \frac{1}{n! \pgb(\gb)} & =  \frac{1}{\binom{n}{2}} \log \frac{1}{\pgb(\gb)} - \frac{\log {n!}}{\binom{n}{2}} \nonumber\\
    & =  \frac{1}{\binom{n}{2}} \sum_{i<j} \log \frac{1}{ P_{\eb}(\eb_{i,j})}  - \frac{\log {n!}}{\binom{n}{2}} \nonumber\\
    & \longrightarrow  H(\eb)
\end{align}
in probability, as $n\to\infty$. 

Let us now argue the convergence of the first term on the right-hand side of \eqref{eq:cond:expansion:general} to $H(\ea, \eb)$.  As 
\begin{align}
    \log \frac{1}{n! \max \limits_{\pi \in \mathcal{S}_n} \pgagb(\ga, \pi(\gb))} &\leq \log \frac{1}{\sum \limits_{\pi \in \mathcal{S}_n}\pgagb(\ga, \pi(\gb))}
    % \nonumber\\& 
    \leq \log \frac{1}{\max \limits_{\pi \in \mathcal{S}_n} \pgagb(\ga, \pi(\gb))}, \nonumber
\end{align}
and $\frac{\log {n!}}{\binom{n}{2}} \to 0$, the desired convergence of the first term on the right-hand side of \eqref{eq:cond:expansion:general} is equivalent to  
    \begin{align}
    \frac{1}{\binom{n}{2}} \log \frac{1}{\max \limits_{\pi \in \mathcal{S}_n} \pgagb(\ga, \pi(\gb))}  \longrightarrow  H(\ea, \eb)
\end{align}
in probability, which is guaranteed by Theorem~\ref{thm:asymp:prob:gen}. This completes the proof of \eqref{eq:asymp:inprob:weighted}.

We know that by the independence across edges and the weak law of large numbers,
$
    \frac{1}{\binom{n}{2}} \log \frac{1}{ \pgagb(\ga|\gb)} $ converges in probability to $ H(\ea| \eb)
$
as $n \to \infty$. The rest of the theorem can be proved analogously by noting that 
\begin{align*}
        \psacgb(\sta| \gb)&= \frac{\sum \limits_{\pi \in \mathcal{S}_n} \pgagb(\pi(\ga), \gb)}{ |\operatorname{Aut}(\ga)|\pgb(\gb)},
    \end{align*}
    \begin{align*}
        \psacsb(\sta| \stb)&= \frac{\sum \limits_{\pi_{a}, \pi_{b} \in \mathcal{S}_n} \pgagb(\pi_{a}(\ga), \pi_{b}(\gb))}{ {|\operatorname{Aut}(\ga)|}\sum \limits_{\pi \in \mathcal{S}_n} \pgb(\pi(\gb))},
    \end{align*}
w.p. $1$ and $1 \leq |\operatorname{Aut}(\ga)| \leq n!$,
and by applying Theorem~\ref{thm:asymp:inprob:weighted} appropriately.
\end{proof}

We are now ready to state our main result for compression of $\ua$, which is $\ga$ or $\sta$, with $\ub$, which is $\gb$ or $\stb$, as side information to the decoder.
\begin{theorem}\label{thm:compress:binary:cer}
    The minimum achievable compression rate is
    \[
     R^* = H(\ea | \eb ).
    \]
    
\end{theorem}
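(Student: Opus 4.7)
The plan is to prove both directions $R^* \le H(\ea|\eb)$ and $R^* \ge H(\ea|\eb)$ within the information-spectrum framework, leveraging the convergence in probability from Theorem~\ref{thm:asymp:inprob:weighted} in place of the classical AEP. That theorem already handles the delicate part, namely showing that $\tfrac{1}{\binom{n}{2}}\log\tfrac{1}{\puacub(\ua|\ub)} \to H(\ea|\eb)$ in probability uniformly across the four combinations of $\ua \in \{\ga,\sta\}$ and $\ub \in \{\gb,\stb\}$, so the rest of the argument is largely routine.

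For achievability, I would fix $\epsilon>0$ and introduce the conditional typical set
\[
\mathcal{T}_n(\epsilon) \,=\, \left\{(\uua,\uub) : \tfrac{1}{\binom{n}{2}}\log\tfrac{1}{\puacub(\uua|\uub)} \le H(\ea|\eb)+\epsilon\right\}.
\]
Theorem~\ref{thm:asymp:inprob:weighted} gives $\mathbb{P}((\ua,\ub)\in\mathcal{T}_n(\epsilon))\to 1$, while summing conditional probabilities over the slice $\{\uua:(\uua,\uub)\in\mathcal{T}_n(\epsilon)\}$ bounds its cardinality by $2^{\binom{n}{2}(H(\ea|\eb)+\epsilon)}$. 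I would then use the standard random binning scheme: draw $\phi_n:\mathcal{U}_a\to[2^{\binom{n}{2}R}]$ uniformly at random for some $R>H(\ea|\eb)+\epsilon$, and let the decoder output the unique element of $\mathcal{T}_n(\epsilon)\cap\phi_n^{-1}(M_n)$ compatible with $\ub$, if it exists. A union bound shows that the average error probability is at most $\mathbb{P}((\ua,\ub)\notin\mathcal{T}_n(\epsilon))+2^{\binom{n}{2}(H(\ea|\eb)+\epsilon-R)}$, both terms vanishing, so a deterministic $(\phi_n,\psi_n)$ achieving rate $R$ exists.

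For the converse, I would apply Fano's inequality: for any sequence of schemes with $\mathbb{P}(\hua \ne \ua)=\epsilon_n\to 0$,
\[
H(\ua\mid M_n,\ub) \,\le\, 1 + \epsilon_n \log|\mathcal{U}_a|.
\]
Since $\ua$ is either a labelled or unlabelled weighted graph on $n$ vertices, $\log|\mathcal{U}_a| \le \binom{n}{2}\log(k_a+1)$, so the Fano term is $o(\binom{n}{2})$. Chaining with $H(\ua|\ub) \le H(M_n)+H(\ua\mid M_n,\ub) \le \binom{n}{2}R+o(\binom{n}{2})$ and dividing by $\binom{n}{2}$, Lemma~\ref{lem:asymp:limit} forces $R \ge H(\ea|\eb)$.

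The real conceptual obstacle, namely the loss of edge-wise independence once structures are involved, has already been dispatched by Theorems~\ref{thm:asymp:prob:gen} and~\ref{thm:asymp:inprob:weighted}; what remains is bookkeeping. The only care needed is to state the achievability scheme and the Fano-style converse uniformly over the four choices of $(\ua,\ub)$, but the crude cardinality bound $(k_a+1)^{\binom{n}{2}}$ on $|\mathcal{U}_a|$ together with the uniform information-spectrum limit suffices to cover the labelled and unlabelled cases simultaneously, without any new ideas.
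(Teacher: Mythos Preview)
Your proposal is correct and follows essentially the same approach as the paper: random binning plus the information-spectrum typical set (you use a one-sided set, the paper a two-sided one, but this is cosmetic) for achievability, and Fano's inequality combined with Lemma~\ref{lem:asymp:limit} for the converse, in both cases invoking Theorem~\ref{thm:asymp:inprob:weighted} for the convergence of the conditional information density. Your cardinality bound $\log|\mathcal{U}_a|\le \binom{n}{2}\log(k_a+1)$ is in fact slightly more careful than the paper's, which writes $\log 2^{\binom{n}{2}}$ in the Fano step.
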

\begin{proof}
\noindent\underline{Converse part:} Consider any sequence of encoding and decoding schemes $\{(\phi_n,\psi_n)\}_{n=1}^{\infty}$ such that $M_n=\phi_n(\ua) \in \Big[2^{^{\binom{n}{2}R}}\Big]$ is the message and  $\hua=\psi_n\left(M_n, \ub\right)$ is the recovered graph with the probability of error going to zero, i.e.,
    $$\epsilon_n\triangleq\mathbb{P}\left\{\hua \neq \ua\right\} \to 0.$$
    By Fano's inequality, we have $H(\ua |M_n, \ub) \leq H(\ua |\hua)\leq h_2(\epsilon_n)+\epsilon_n \log 2^{{\binom{n}{2}}} \leq  1+ \binom{n}{2} \epsilon_n,$
    where $h_2(x):= -x \log x-(1-x)\log(1-x), x \in [0,1]$, with the convention that $0 \log 0:=0$.
The rate $R$ can be bounded from below as follows:
\begin{align*}
    \binom{n}{2}R=\log |\mathcal{M}_n|  &\geq H(M_n) \\
    & \geq H(M_n|\ub)\\
    & \stackrel{(a)}{=} I(M_n;\ua|\ub)\\
    & = H(\ua|\ub) -H(\ua|M_n, \ub) \\
    & \stackrel{(b)}{\geq} H(\ua|\ub) - 1- \binom{n}{2} \epsilon_n,
\end{align*}
where $(a)$ uses the fact that $M_n$ is a function of $\ua$, and $(b)$ is a consequence of Fano's inequality. By letting $n$ approach infinity, we get
\begin{align}\label{ineq:conv:bound1}
    R \geq \lim_{n \to \infty}\frac{1}{\binom{n}{2}}H(\ua|\ub).
\end{align}
We know from Lemma~\ref{lem:asymp:limit} that when $\ua$ is $\ga$ or $\sta$, and $\ub$  is $\gb$ or $\stb$, $\lim_{n \to \infty}\frac{1}{\binom{n}{2}}H(\ua|\ub)=H(\ea | \eb)$. As \eqref{ineq:conv:bound1} holds for any achievable rate $R$,  we have the converse result  
\begin{align}
    R^* \geq H(\ea | \eb ).
\end{align}

\noindent\underline{Achievability part:} Let us now show that any rate arbitrarily close to $H(\ea | \eb )$ is achievable, which implies $R^* \leq H(\ea | \eb )$. We use the standard random binning approach to show the existence of an encoding and decoding scheme with the desired performance. Generate a codebook by assigning each realization $\uua \in \big[2^{{\binom{n}{2}}}\big]$ to an index $\Phi_n(\uua) \in \big[2^{{\binom{n}{2}R}}\big]$ chosen uniformly at random. Assume that the codebook chosen in this fashion is made available to the decoder as well. Fix a $\delta>0$, and let 
\begin{align}
    \mathcal{T}_n(\delta) \triangleq \left\{ (\uua, \uub) : \left \lvert \frac{1}{\binom{n}{2}}\log\frac{1}{\puacub(\uua| \uub)} - H(\ea | \eb )\right \rvert \leq \delta \right\}.
\end{align}
Upon seeing $\ua$, the encoder outputs the corresponding index $\Phi_n(\ua)$. Given an index $I \in \big[2^{{\binom{n}{2}R}}\big]$ and side information $\ub$, the decoder looks for a unique $\huua$ such that $I=\Phi_n(\huua)$ and $(\huua, \ub) \in \mathcal{T}_n(\delta)$. 

We will now argue that if $R = H(\ea | \eb ) + 2\delta$, then the probability of decoding error averaged over all random codebooks $ \Bar{\epsilon}_n\triangleq\mathbb{P}\big(\hua \neq \ua\big)$ goes to zero; this will then imply the existence of a codebook with the probability of decoding error $\epsilon_n \to 0$. For the encoding and decoding schemes we considered, it is well-known  \cite[Lemma~7.2.1.]{han_book} that 
\begin{align}\label{eq:errorbound}
\Bar{\epsilon}_n \leq \mathbb{P}(\mathcal{T}^{^c}_n(\delta))+2^{-\binom{n}{2}\delta}.
\end{align}
For the sake of completeness, we will give its proof here. Define the error events $\mathcal{E}_1:=\left\{(\ua, \ub) \notin \mathcal{T}_n(\delta)\right\}$ and $\mathcal{E}_2:=\left\{\exists \  \huua \neq \ua \text{ s.t. } \Phi_n(\huua)=\Phi_n(\ua), (\huua, \ub) \in \mathcal{T}_n(\delta)\right \}$, where $\Phi_n$ is a random function.  For a decoding error, one of the error events must occur. Therefore, by the union bound, we have
\begin{align*}
    \Bar{\epsilon}_n \leq \mathbb{P}(\mathcal{E}_1) + \mathbb{P}( \mathcal{E}_2) = \mathbb{P}(\mathcal{T}^{^c}_n(\delta)) + \mathbb{P}(\mathcal{E}_2).
\end{align*}
It is enough to show that $\mathbb{P}(\mathcal{E}_2) \leq 2^{-\binom{n}{2}\delta}$.  
\begin{align}
    \mathbb{P}(\mathcal{E}_2) & \leq \sum_{ \uua, \uub} \bigg[\puaub(\uua, \uub)\cdot\sum_{\substack{\huua \neq \uua:\\ (\huua, \uub) \in \mathcal{T}_n(\delta)}} \mathbb{P}(\Phi_n(\huua) = \Phi_n(\uua))\bigg]\nonumber\\
    % &= \sum_{ (\gga, \stbrel)} \mathbb{P}(\gga, \stbrel)\cdot \left[\sum_{\substack{\hgga \neq \gga \\ (\hgga, \stbrel) \in \mathcal{T}_n(\delta)}} \Bigg[\sum_{i=1}^{2^{^{\binom{n}{2}R}}}\mathbb{P}(\phi_n(\hgga) =i,  \phi_n(\gga)=i) \Bigg]\right]\\
    & = \sum_{ \uua, \uub} \bigg[\puaub(\uua, \uub)\cdot\sum_{\substack{\huua \neq \uua: \\ (\huua, \uub) \in \mathcal{T}_n(\delta)}} \frac{1}{2^{^{\binom{n}{2}R}}} \bigg]\nonumber\\
    & \leq \frac{1}{2^{^{\binom{n}{2}R}}} \sum_{ \uua, \uub} \puaub(\uua, \uub) \left|\left\lbrace \huua :(\huua, \uub) \in \mathcal{T}_n(\delta) \right \rbrace \right|. \label{eq:card:bound}
\end{align}
Since $\puacub(\huua| \uub) \geq 2^{-H(\ea | \eb) - \delta}$ for any $(\huua, \uub) \in \mathcal{T}_n(\delta)$, we have 
\begin{align}
1 &\geq \sum_{ \huua:(\huua, \uub) \in \mathcal{T}_n(\delta)} \puacub(\huua| \uub) \\& \geq 2^{-\binom{n}{2}\left[H(\ea | \eb) + \delta\right]} \left|\left\lbrace \huua :(\huua, \uub) \in \mathcal{T}_n(\delta) \right \rbrace \right|, 
\end{align}
which implies that $\left|\left\lbrace \huua :(\huua, \uub) \in \mathcal{T}_n(\delta) \right \rbrace \right| \leq 2^{\binom{n}{2}\left[H(\ea | \eb) + \delta\right]}$. By plugging this into \eqref{eq:card:bound}, we have that for $R = H(\ea | \eb ) + 2\delta$,
\begin{align*}
    \mathbb{P}(\mathcal{E}_2) &\leq 2^{\binom{n}{2}\left[H(\ea | \eb) - R + \delta\right]}\sum_{ \uua, \uub} \puaub(\uua, \uub)\\
    & = 2^{-\binom{n}{2}\delta},
\end{align*}
yielding the inequality \eqref{eq:errorbound}.

We know from  Theorem~\ref{thm:asymp:inprob:weighted} that 
$$\frac{1}{\binom{n}{2}}\log\frac{1}{\puacub(\ua| \ub)}  \longrightarrow  H(\ea | \eb ),$$
in probability, which means that for any $\delta >0$, 
$\mathbb{P}(\mathcal{T}^{^c}_n(\delta)) \to 0$ (or, equivalently $\mathbb{P}(\mathcal{T}_n(\delta)) \to 1$) as $n \to \infty$. This implies that the right-hand side of \eqref{eq:errorbound} goes to zero as $n \to \infty$, i.e.,  the probability of error (averaged over codebooks) $\Bar{\epsilon}_n \to 0$. Hence, there exists a codebook with the compression rate $H(\ea | \eb ) + 2\delta$. Since $\delta >0$ is arbitrary, we have $R^*\leq H(\ea | \eb )$, completing the proof of the theorem.
\end{proof}

It follows from the proof of Theorem~\ref{thm:asymp:inprob:weighted} that the decoder can label the involved structure in all possible ways and check if the maximum of the joint probability distribution over all labelings satisfies a typicality condition. For instance, consider the case  $\ua=\ga$ and $\ub=\stb$. It follows from the proof of Theorem~\ref{thm:asymp:inprob:weighted} that in order to check if a graph $\gga$ in a bin and the side information $\stbrel$ satisfy the condition $(\gga,\stbrel) \in \mathcal{T}_n(\delta)$, the decoder can first verify if the side information $\stbrel$ is a typical unlabelled graph or not, and then check the joint typicality condition
\begin{align}
    2^{-\binom{n}{2}\left[H(\ea,\eb) +\frac{\delta}{2}\right]} \leq \max_{\ggb' \cong \stbrel}\pgagb(\gga, \ggb') \leq 2^{-\binom{n}{2}\left[H(\ea,\eb) -\frac{\delta}{2}\right]}
\end{align}
by labelling $\stbrel$ in all possible ways.

%Proofs
\section{Unweighted Graphs}
\label{sec:unweighted_graphs}
In this section, we consider the special case of unweighted graphs. When the graphs are unweighted, the entries of the adjacency matrices take values in $\{0,1\}$. In this case, we can cast Theorem~\ref{thm:asymp:prob:gen} and the joint typicality condition in terms of graph alignment. 

\subsection{Graph Alignment}
Let us briefly look at the idea of graph alignment, which finds its applications in deanonymization \cite{pedarsani_2011_privacy, narayan_deanonymize} and pattern recognition \cite{berg_pattern}. In a nutshell, graph alignment involves finding a correspondence between the vertex sets of two graphs while optimizing an objective function. Let  $(\ga,\gb)$ be the
correlated Erd\H{o}s-R\'enyi random (unweighted) graphs with $P_{\ea, \eb}$ being the corresponding joint distribution of the entries of the adjacency matrices. We use the shorthand notation $\pab$ instead of $P_{\ea, \eb}(a,b)$ for $a, b \in \{0,1\}$.

For example, let us consider the graph deanonymization problem. Here, a structure $\stbrel$ corresponding to a true graph $\ggb$ is revealed publicly. The aim is to identify the true graph from $\stbrel$ using another correlated labelled  graph $\gga$. Given $\stbrel$ alone, all the distinct graphs obtained by labelling $\stbrel$ in all possible ways are equally likely. Note that the true graph $\ggb$ corresponds to some labelling of $\stbrel$. However, if one has access to a correlated labelled graph $\gga$, then they can use this extra information to estimate the true graph of $\stbrel$ using $\gga$. This can be done by using the  MAP estimator, which is optimal in the sense of minimizing the probability of error in recovering the true graph $\ggb$. The MAP estimator is given by
\begin{align}
&\underset{\ggb'}{\operatorname{argmax}} \  P_{\scriptscriptstyle \gb|\ga,\stb}\left(\ggb'|\gga,\stbrel\right) =  \underset{\ggb' \cong \stbrel}{\operatorname{argmax}} \  P_{\scriptscriptstyle \ga,\gb}\left(\gga, \ggb'\right) \nonumber. 
% \\ & \qquad \qquad \qquad \qquad \quad= \underset{\pi^{-1}(\ggb) :\ \pi \in \mathcal{S}_n}{\operatorname{argmax}} \  P_{\scriptscriptstyle \ga,\gb}\left(\gga, \pi^{-1}(\ggb)\right) \nonumber,
\end{align}
The joint distribution $P_{\scriptscriptstyle \ga,\gb}\left(\gga, \ggb'\right)$ in the unweighted case can be expressed as
\begin{align}
     &\prod_{i<j} \pzerozero^{(1-{a}_{i,j})(1-b'_{i, j})}\ \ponezero^{a_{i,j}(1-b'_{i, j})}\  \pzeroone^{(1-{a}_{i,j})b'_{i,j}}  \ \poneone^{a_{i,j}b'_{i,j}}\nonumber\\
     & \qquad \qquad=\pzerozero^{\binom{n}{2}}
    \left(\frac{\ponezero}{\pzerozero}\right)^{\sum \limits_{i<j}a_{i,j}}
    \left(\frac{\pzeroone}{\pzerozero}\right)^{\sum \limits_{i<j}b'_{i,j}}
    \left(\frac{\poneone\pzerozero}{\ponezero\pzeroone}\right)^{\sum\limits_{i<j}a_{i,j}b'_{i,j}},\label{eq: unweighted:joint}
\end{align}
where $a_{i,j}  \in \{0,1\}$ and $b_{i,j} \in \{0,1\}$, $i, j \in [n]$, are the entries of the adjacency matrices of $\gga$ and $\ggb$, respectively. Since the values of the sums $\sum \limits_{i<j}a_{i,j}$ and $\sum \limits_{i<j}b'_{i,j}$ remain unchanged for any $\ggb' \cong \stbrel$, the MAP estimator becomes 
\begin{align*}
\underset{\ggb' \cong \stbrel}{\operatorname{argmax}} \left(\frac{\poneone\pzerozero}{\ponezero\pzeroone}\right)^{\sum_{i<j}a_{i,j}b'_{i,j}}.
\end{align*}
By using the fact that all the graphs having the structure $\stbrel$ can be expressed as relabelled versions of the true graph $\ggb$, the MAP estimator can be  written equivalently as 
\begin{align*}
\underset{\pi^{-1}(\ggb): \pi  \in \mathcal{S}_n}{\operatorname{argmax}} \left(\frac{\poneone\pzerozero}{\ponezero\pzeroone}\right)^{\sum_{i<j}a_{i,j}b_{\pi(i),\pi(j)}}.
\end{align*}
 Thus, if $\poneone\pzerozero>\ponezero\pzeroone$, the MAP estimator chooses a graph $\ggb'= \pi^{-1}(\ggb)$ for some $\pi \in \mathcal{S}_n$ such that it has the maximum value of $\sum_{i<j}a_{i,j}b'_{i,j}=\sum_{i<j}a_{i,j}b_{\pi(i), \pi(j)}$. In the other case of $\poneone\pzerozero<\ponezero\pzeroone$, it produces the one that minimizes $\sum_{i<j}a_{i,j}b'_{i,j}=\sum_{i<j}a_{i,j}b_{\pi(i), \pi(j)}$. 

 The graph alignment statistic $\sum_{i<j}\ea_{i,j}\eb_{i, j}$ counts the pairs of vertices which have an edge in both the graphs $\ga$ and $\gb$. Given a graph $\ga$ and a structure $\stb$, the MAP estimator identifies the labelling of $\stb$ that optimizes this statistic. We can also  consider other equivalent statistics instead of $\sum_{i<j}\ea_{i,j}\eb_{i, j}$; for instance, the so-called \emph{matching error} \cite{pedarsani_2011_privacy}, $\sum_{i<j}\left[(1-\ea_{i,j})\eb_{i, j} + \ea_{i,j}(1-\eb_{i,j})\right]$, can also be used for the MAP estimation. The matching error measures the number of pairs of vertices for which the vertices are connected by an edge in $\ga$ but not in $\gb$, or the vertices are connected by an edge in $\gb$ but not in $\ga$.

 It was shown in \cite{cullina_improved, cullina_exact} that the MAP estimator recovers the true graph with probability going to one as the number vertices grow larger and larger. This result is stated next. Though the original result applies to a more general setting where $P_{\ea, \eb}$ depends on $n$, we state it below in the special case of constant $P_{\ea, \eb}$.
 
 \begin{theorem}[\cite{cullina_exact,cullina_improved}]\label{thm:cullina}
       For correlated Erd\H{o}s-R\'enyi random (unweighted) graphs $(\ga,\gb)$ with  $\poneone\pzerozero > \ponezero\pzeroone$    
    \begin{align}\label{eq:max:close}
    \mathbb{P}\left(\sum_{i<j}\ea_{i,j}\eb_{i,j} < \max_{\pi \in \mathcal{S}_n}\sum_{i<j}\ea_{i,j}\eb_{\pi(i),\pi(j)} \right) \longrightarrow 0
\end{align}
as $n \to \infty$. If $\poneone\pzerozero < \ponezero\pzeroone$, then
\begin{align}\label{eq:min:close}
    \mathbb{P}\left(\sum_{i<j}\ea_{i,j}\eb_{i,j} > \min_{\pi \in \mathcal{S}_n}\sum_{i<j}\ea_{i,j}\eb_{\pi(i),\pi(j)} \right) \longrightarrow 0
\end{align} 
as $n \to \infty$.
\end{theorem}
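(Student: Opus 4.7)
The two cases of the theorem are symmetric: the minimum case follows from the maximum case by replacing each $\eb_{i,j}$ with $1-\eb_{i,j}$, which swaps $\pzerozero \leftrightarrow \pzeroone$ and $\ponezero \leftrightarrow \poneone$, so that the sign condition flips and the max becomes a min. I focus on the case $\poneone\pzerozero > \ponezero\pzeroone$. Define $T_\pi \triangleq \sum_{i<j}\ea_{i,j}\eb_{\pi(i),\pi(j)}$ for $\pi \in \mathcal{S}_n$, and write $e$ for the identity permutation. The target event $\{T_e < \max_\pi T_\pi\}$ coincides with $\{\exists\, \pi \neq e : T_\pi > T_e\}$, so the plan is a union bound over non-identity permutations combined with a per-permutation Chernoff-type estimate on $\mathbb{P}(T_\pi > T_e)$.

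The key parameter is $k(\pi) \triangleq |\{i : \pi(i) \neq i\}|$. Let $D(\pi) \triangleq \{(i,j) : i<j,\ \{\pi(i),\pi(j)\} \neq \{i,j\}\}$ be the disagreement set; for $(i,j) \notin D(\pi)$ the term $\ea_{i,j}\eb_{\pi(i),\pi(j)}$ equals $\ea_{i,j}\eb_{i,j}$ by symmetry of the adjacency matrix, so $T_e - T_\pi = \sum_{(i,j) \in D(\pi)}\ea_{i,j}(\eb_{i,j}-\eb_{\pi(i),\pi(j)})$. An elementary count gives $|D(\pi)| \geq c'\, n\, k(\pi)$ for some absolute $c'>0$, since every moved vertex pairs with roughly $n$ distinct partners to produce disagreement pairs. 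For $(i,j) \in D(\pi)$, the condition $\{\pi(i),\pi(j)\} \neq \{i,j\}$ ensures that $\eb_{\pi(i),\pi(j)}$ is independent of $\ea_{i,j}$ under the CER product measure, and the hypothesis $\poneone\pzerozero > \ponezero\pzeroone$ is equivalent to a strictly positive covariance between $\ea_{i,j}$ and $\eb_{i,j}$; together these give $\mathbb{E}[\ea_{i,j}(\eb_{i,j}-\eb_{\pi(i),\pi(j)})] =\Delta > 0$ for every disagreement pair, so $\mathbb{E}[T_e - T_\pi] \geq \Delta \cdot |D(\pi)|$.

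Because a single variable $\eb_{k,\ell}$ may enter multiple summands of $T_e - T_\pi$ (whenever $(k,\ell)$ and $(\pi^{-1}(k),\pi^{-1}(\ell))$ are both disagreement pairs), the summands are dependent, and a direct Hoeffding bound does not apply. The standard remedy is to partition $D(\pi)$ according to the orbits of the induced action of $\langle \pi \rangle$ on unordered pairs; within each orbit, one traverses the pairs in a cyclic order so that a constant-fraction subset has pairwise-disjoint $\eb$-indices, and then applies Hoeffding's inequality on that subset. This yields $\mathbb{P}(T_e - T_\pi < 0) \leq \exp(-c\, |D(\pi)|)$ for some $c>0$. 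The number of permutations with $k(\pi) = k$ is at most $\binom{n}{k} k! \leq n^k$, so the union bound gives
\begin{align*}
    \mathbb{P}\bigl(T_e < \max_\pi T_\pi\bigr) \leq \sum_{k=2}^{n} n^k \exp\bigl(-c\, c'\, n k\bigr) = \sum_{k=2}^{n} \exp\bigl(k(\log n - c\, c'\, n)\bigr) \longrightarrow 0
\end{align*}
as $n \to \infty$. The main obstacle is the concentration step: the dependence structure among $\{\eb_{\pi(i),\pi(j)}\}$ across disagreement pairs must be broken via a careful grouping into independent subsums, and the resulting exponential decay must be sharp enough in $|D(\pi)|$ to dominate the $n^k$ counting factor in the union bound.
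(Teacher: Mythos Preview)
The paper does not prove this theorem; it is quoted from the cited works of Cullina and Kiyavash, and the only additional information the paper supplies is the explicit rate $2^{-[(n-2)(\sqrt{\pzerozero\poneone}-\sqrt{\pzeroone\ponezero})^2- \log n]}$, again without derivation. So there is no in-paper argument to compare against; the relevant benchmark is the argument in the cited references, whose skeleton your proposal reproduces correctly: the symmetry reduction for the second case, the union bound stratified by $k(\pi)$, the count $|D(\pi)|\gtrsim n\,k(\pi)$, and the identification of the per-term mean with the covariance $\Delta=\poneone\pzerozero-\ponezero\pzeroone>0$ are all right.

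The one genuine gap is the concentration step as you wrote it. Extracting a single constant-fraction subset of $D(\pi)$ with independent summands and applying Hoeffding there bounds the probability that the \emph{subset} sum is negative, not $\mathbb{P}(T_e-T_\pi<0)$; the discarded terms can have total magnitude comparable to $|D(\pi)|$, so the subset bound does not transfer. Two clean repairs are available. First, the orbit idea can be salvaged: the terms $A_p(B_p-B_{\pi(p)})$ along an orbit of length $m$ form a dependency cycle $C_m$, so three colour classes (two if $m$ is even) give a \emph{partition} of $D(\pi)$ into at most three blocks of mutually independent summands; since $\{T_e-T_\pi<0\}\subseteq\bigcup_i\{S_i<\mathbb{E}S_i-\Delta|D(\pi)|/3\}$, Hoeffding on each block plus a union bound yields the desired $\exp(-c\,|D(\pi)|)$. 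Second, and more directly, note that $\pi$ permutes $D(\pi)$, so $T_e-T_\pi$ depends only on the $|D(\pi)|$ i.i.d.\ pair variables $\{(A_p,B_p):p\in D(\pi)\}$, and flipping any one of them changes at most two terms by at most $1$ each; McDiarmid's bounded-differences inequality then gives $\mathbb{P}(T_e-T_\pi<0)\le\exp(-c\,\Delta^2|D(\pi)|)$ in one line. The cited papers in fact compute the moment generating function orbit by orbit, which is what produces the Bhattacharyya-type constant $(\sqrt{\pzerozero\poneone}-\sqrt{\pzeroone\ponezero})^2$ appearing in the explicit rate quoted after the theorem. With either fix, your final union-bound calculation goes through unchanged.
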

 
The convergence in Theorem~\ref{thm:cullina} is in fact exponential in $n$. More precisely, the probabilities are bounded above by $2^{-[(n-2)(\sqrt{\pzerozero\poneone}-\sqrt{\pzeroone\ponezero})^2- 2\log n]}$. Since we know that 
\begin{align*}
    \min \limits_{\pi \in \mathcal{S}_n}\sum_{i<j}\ea_{i,j}\eb_{\pi(i),\pi(j)} &\leq \sum_{i<j}\ea_{i,j}\eb_{i,j}\leq \max \limits_{\pi \in \mathcal{S}_n}\sum_{i<j}\ea_{i,j}\eb_{\pi(i),\pi(j)},
\end{align*}
Theorem~\ref{thm:cullina} says that $\sum_{i<j}\ea_{i,j}\eb_{i,j}$ is equal to $\max \limits_{\pi \in \mathcal{S}_n}\sum_{i<j}\ea_{i,j}\eb_{\pi(i),\pi(j)}$ (resp. $\min \limits_{\pi \in \mathcal{S}_n}\sum_{i<j}\ea_{i,j}\eb_{\pi(i),\pi(j)}$) with high probability when $\poneone\pzerozero > \ponezero\pzeroone$ (resp. $\poneone\pzerozero < \ponezero\pzeroone$). This means the output of the MAP estimator is the true graph $\ggb$ with high probability.

\subsection{Unweighted Graph Compression with Side Information}

In the (unweighted) graph compression with side information, we can establish a convergence result, closely related to Theorem~1,  for the graph alignment statistic. This result, which is stated next as a lemma, is an immediate consequence of Theorem~\ref{thm:cullina}.

\begin{lemma}\label{lem:asymp:qap}
    For correlated Erd\H{o}s-R\'enyi random (unweighted) graphs $(\ga,\gb)$ with  $\poneone\pzerozero > \ponezero\pzeroone$    
    \begin{align}\label{eq:max:conv:prob}
    \frac{\max \limits_{\pi \in \mathcal{S}_n}\sum_{i<j}\ea_{i,j}\eb_{\pi(i),\pi(j)}}{\binom{n}{2}} \longrightarrow \poneone
\end{align}
in probability as $n \to \infty$. On the other hand, if $\poneone\pzerozero < \ponezero\pzeroone$ then
\begin{align}\label{eq:min:conv:prob}
    \frac{\min \limits_{\pi \in \mathcal{S}_n}\sum_{i<j}\ea_{i,j}\eb_{\pi(i),\pi(j)}}{\binom{n}{2}} \longrightarrow \poneone
\end{align}
in probability as $n \to \infty$. The same statements hold even for the maximum and minimum of  \linebreak $\sum \limits_{i<j}\ea_{\pi_{a}(i),\pi_{a}(j)}\eb_{\pi_{b}(i),\pi_{b}(j)}$ over $\pi_{a}, \pi_{b} \in \mathcal{S}_n$.
\end{lemma}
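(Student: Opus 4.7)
The plan is to combine the weak law of large numbers with Theorem~\ref{thm:cullina}, which together pin down both the identity-permutation value and its distance from the optimum. Concretely, the random variables $\{\ea_{i,j}\eb_{i,j}:1\le i<j\le n\}$ are i.i.d.\ Bernoulli with mean $\poneone$, so by the WLLN (or Hoeffding's inequality, as in the proof of Theorem~\ref{thm:asymp:prob:gen}),
\begin{equation*}
\frac{1}{\binom{n}{2}}\sum_{i<j}\ea_{i,j}\eb_{i,j}\;\longrightarrow\;\poneone
\end{equation*}
in probability as $n\to\infty$. On the event that the identity is an optimal alignment, the normalized maximum coincides with this sum and therefore concentrates at $\poneone$.

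I would then convert this into a formal $\varepsilon$--$\delta$ argument via a union bound. Fix $\delta>0$ and define the two events
\begin{align*}
\mathcal{E}_n^{(1)} &:= \left\{\sum_{i<j}\ea_{i,j}\eb_{i,j}\neq \max_{\pi\in\mathcal{S}_n}\sum_{i<j}\ea_{i,j}\eb_{\pi(i),\pi(j)}\right\},\\
\mathcal{E}_n^{(2)} &:= \left\{\left|\tfrac{1}{\binom{n}{2}}\sum_{i<j}\ea_{i,j}\eb_{i,j}-\poneone\right|>\delta\right\}.
\end{align*}
On $(\mathcal{E}_n^{(1)}\cup\mathcal{E}_n^{(2)})^c$ the normalized maximum lies within $\delta$ of $\poneone$, so
\begin{equation*}
\mathbb{P}\!\left(\left|\tfrac{\max_{\pi}\sum_{i<j}\ea_{i,j}\eb_{\pi(i),\pi(j)}}{\binom{n}{2}}-\poneone\right|>\delta\right)\le \mathbb{P}(\mathcal{E}_n^{(1)})+\mathbb{P}(\mathcal{E}_n^{(2)}).
\end{equation*}
The first term vanishes by Theorem~\ref{thm:cullina} (in the regime $\poneone\pzerozero>\ponezero\pzeroone$), and the second by the WLLN, proving \eqref{eq:max:conv:prob}. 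The case $\poneone\pzerozero<\ponezero\pzeroone$ with the minimum is handled verbatim by replacing $\max$ with $\min$ and invoking \eqref{eq:min:close}; note that the limit value is still $\poneone$ because the identity-permutation sum is what concentrates, regardless of whether it is an upper or lower extreme of the optimization.

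For the two-permutation extension, I would reuse the symmetry argument from the proof of Theorem~\ref{thm:asymp:prob:gen}: a change of summation index shows that
\begin{equation*}
\sum_{i<j}\ea_{\pi_a(i),\pi_a(j)}\eb_{\pi_b(i),\pi_b(j)}=\sum_{i'<j'}\ea_{i',j'}\eb_{\pi_b\circ\pi_a^{-1}(i'),\,\pi_b\circ\pi_a^{-1}(j')},
\end{equation*}
so that optimizing over $(\pi_a,\pi_b)\in\mathcal{S}_n^2$ reduces to optimizing over a single $\pi=\pi_b\circ\pi_a^{-1}\in\mathcal{S}_n$, and the previous argument applies without change. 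No essentially new obstacle arises — the whole lemma is a transfer of Theorem~\ref{thm:cullina} from a statement about the optimizer of the alignment statistic to a statement about the normalized optimum value; the only mildly delicate point is ensuring the two error events above are simultaneously small, which is immediate from the union bound since both bounds are independent of the specific $\pi$.
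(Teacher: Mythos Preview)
Your proposal is correct and follows essentially the same approach as the paper: both combine Theorem~\ref{thm:cullina} (to control the gap between the identity-permutation sum and the optimum) with the WLLN (to pin the identity sum at $\poneone$), and then conclude via a union bound. Your event decomposition and your explicit reduction of the two-permutation case via $\pi=\pi_b\circ\pi_a^{-1}$ are in fact slightly more direct than the paper's writeup, but the argument is the same.
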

\begin{proof}
    Let $\poneone\pzerozero > \ponezero\pzeroone$. Fix a $\delta >0$,
\begin{align}  &\mathbb{P}\left(\left\lvert\frac{\max_{\pi \in \mathcal{S}_n}\sum_{i<j}\ea_{i,j}\eb_{\pi(i),\pi(j)}}{\binom{n}{2}} - p_{11}\right\rvert>\delta\right)\nonumber \\ & \quad = \mathbb{P}\left(\left\lvert\max_{\pi \in \mathcal{S}_n}\sum_{i<j}\ea_{i,j}\eb_{\pi(i),\pi(j)}- \sum_{i<j}\ea_{i,j}\eb_{i,j} \right\rvert>{\binom{n}{2}}\frac{\delta}{2}\right)
% \nonumber\\ & \mkern 100mu 
+\mathbb{P}\left(\left\lvert\frac{\sum_{i<j}\ea_{i,j}\eb_{i,j} }{\binom{n}{2}} - p_{11}\right\rvert>\frac{\delta}{2}\right) \label{eq:firstterm}
\end{align}
As we know that 
\begin{align}
    \frac{\sum_{i<j}\ea_{i,j}\eb_{i,j}}{\binom{n}{2}} \longrightarrow \mathbb{E}[\ea\eb]=p_{11}\nonumber
\end{align}
in probability by the weak law of large numbers, the second term in \eqref{eq:firstterm} goes to zero. The first term also goes to zero because of Theorem~\ref{thm:cullina}. This completes the proof of \eqref{eq:max:conv:prob}. The rest of the theorem statements follow in a similar way.
\end{proof}

Though the statement of Lemma~\ref{lem:asymp:qap} is weaker than that of Theorem~\ref{thm:cullina}, we record it here as it could be of independent interest. 

% As $\sum_{i<j}\ea_{i,j}\eb_{i,j}/\binom{n}{2}$ is close to $\max \limits_{\pi \in \mathcal{S}_n}\sum_{i<j}\ea_{i,j}\eb_{\pi(i),\pi(j)}/\binom{n}{2}$, we can 
When trying to check if an $\uua$ in a bin and the side information $\uub$ satisfy $(\uua,\uub) \in \mathcal{T}_n(\delta)$, the decoder can use the graph alignment statistic.
Because
\begin{align}
    \frac{-\log \pgagb(\gga, \ggb')}{\binom{n}{2}} \nonumber= -\log \pzerozero - \frac{\sum_{i<j}\ea_{i,j}}{\binom{n}{2}}\log \frac{\ponezero}{\pzerozero} -\frac{\sum_{i<j}\eb'_{i,j}}{\binom{n}{2}}\log  \frac{\pzeroone}{\pzerozero} 
    % \nonumber\\
    % & \qquad \qquad \qquad \qquad \qquad \qquad \qquad
    - \frac{\sum\limits_{i<j}\ea_{i,j}\eb'_{i, j}}{\binom{n}{2}}\log \frac{\poneone\pzerozero}{\ponezero\pzeroone}, \nonumber
\end{align}
the decoder can, for instance, in the case of  $\ua=\ga$ and $\ub=\stb$, verify $(\gga,\stbrel) \in \mathcal{T}_n(\delta)$ by first checking if the graph $\gga$ in the bin and the structure $\stbrel$ are typical or not, i.e.,
$$\binom{n}{2}\left(\ponezero+ \poneone - \delta_1\right)\leq \sum_{i<j}\ea_{i,j} \leq \binom{n}{2}\left(\ponezero+ \poneone + \delta_1\right), $$
and 
$$\binom{n}{2}\left(\pzeroone+ \poneone - \delta_2\right)\leq \sum_{i<j}\eb'_{i,j} \leq \binom{n}{2}\left(\pzeroone+ \poneone + \delta_2\right)$$
where $\delta_1:=\delta/ \lvert \log \frac{\ponezero}{\pzerozero}\rvert$ and $\delta_2:=\delta/ \lvert \log \frac{\pzeroone}{\pzerozero}\rvert$. Here $\sum_{i<j}\eb'_{i,j}$ counts the number of edges in the structure $\stbrel$. Next, the decoder can label
$\stbrel$ in all possible ways and check if one of the following graph alignment conditions is satisfied:
$$\binom{n}{2}\left(\poneone - \delta_3\right)\leq \max_{\ggb' \cong \stbrel} \sum_{i<j}\ea_{i,j}\eb'_{i,j} \leq \binom{n}{2}\left(\poneone + \delta_3\right),$$
if $\poneone\pzerozero > \ponezero\pzeroone$, or 
$$\binom{n}{2}\left(\poneone - \delta_3\right)\leq \min_{\ggb' \cong \stbrel} \sum_{i<j}\ea_{i,j}\eb'_{i,j} \leq \binom{n}{2}\left(\poneone + \delta_3\right),$$
if $\poneone\pzerozero < \ponezero\pzeroone$, where $\delta_3:=\delta/ \lvert \log \frac{\poneone\pzerozero}{\ponezero\pzeroone}\rvert$.

%Discussion
\section{Discussion}
\label{sec:discussion}
In this work, we established the minimum achievable compression rate when the object to compress and the object available as side information are graphs or structures (unlabelled graphs). Unlike a labelled graph, a structure does not lend itself to a simple analysis. However, we can rely on the observation that structures retain most of the information about the labelled graphs. For an Erd\H{o}s-R\'enyi graph, the entropy of a structure is roughly $n\log n$ bits less than the entropy of a labelled graph, which is of the order $n^2$ when the model parameters do not depend on the number of vertices $n$. 

One of the key arguments in this paper is that the joint probability  involving a structure, which is the sum of the joint probabilities over all labellings, can be approximated by the maximum of the joint probabilities over all labellings. We have shown that this maximum in turn is close in some sense to the true underlying labelling of the structure. As a result, we were able to show that the asymptotics of the conditional distributions involving structures are the same as those of the distributions involving only labelled graphs, yielding $R^*= H(\ea|\eb)$.

Our results readily extend to distributed compression involving unlabelled graphs. In this situation, Theorem~\ref{thm:asymp:prob:gen} and Theorem~\ref{thm:asymp:inprob:weighted} have to be extended for the problem under consideration. These results can then be used to construct an encoding and decoding scheme. In the current work, we considered the dense regime of the CER model, where $P_{\ea, \eb}$ does not depend on $n$. It is also of interest to consider other regimes where $P_{\ea, \eb}$  does depend on $n$. We believe that the current techniques should be extendable to other regimes as well, but we leave these questions for future study.

The CER model is rather ideal when considered in the context of practical graphs. Nevertheless, the model captures first-order correlations between graphs and can guide the choice of schemes to implement for real-world graphs. In addition to the CER model, it is also of interest to study graph compression with side information for random graph models such as the stochastic block model, where the connectivity of vertices is determined by the latent community membership variables. Furthermore, we can also study a more realistic scenario where the correlated graphs do not share the same vertex set but do possess some overlap of vertices.

\bibliographystyle{IEEEtran}
\bibliography{IEEEabrv,References}

\end{document}